\newenvironment{sciabstract}{%
\begin{quote} \bf}
{\end{quote}}
\newcounter{lastnote}
\title{ {The uncertainty principle determines the non-locality of quantum mechanics}}
\author
{Jonathan Oppenheim$^{1}$, Stephanie Wehner$^{2,3}$\\
\\
\normalsize{$^{1}$DAMTP, University of Cambridge, CB3 0WA, Cambridge, UK}\\ 
\normalsize{$^{2}$Institute for Quantum Information, Caltech, Pasadena, CA 91125, USA}\\
\normalsize{$^{3}$Centre for Quantum Technologies, National University of Singapore, 117543 Singapore}\\
\\
}
\date{}
\newcommand{\be}{\begin{eqnarray} \begin{aligned}}
\newcommand{\ee}{\end{aligned} \end{eqnarray} }
\newcommand{\benn}{\begin{eqnarray*} \begin{aligned}}
\newcommand{\eenn}{\end{aligned} \end{eqnarray*} }
\newcommand{\bc}{\begin{center}}
\newcommand{\ec}{\end{center}}
\newcommand{\id}{\mathbb{I}}
\newcommand{\tr}{\mathop{\mathsf{tr}}\nolimits}
\newtheorem{theorem}{Theorem}[section]
\newtheorem{lemma}[theorem]{Lemma}
\newtheorem{claim}[theorem]{Claim}
\newtheorem{corollary}[theorem]{Corollary}
	\newcommand{\myacknowledgments}{{\bf Acknowledgments\ }}
\def\Real{\mathbb{R}}
\def\id{\mathbb{I}}
\def\01{\{0,1\}}
\newcommand{\floor}[1]{\lfloor{#1}\rfloor}
\newcommand{\eps}{\varepsilon}
\newcommand{\ket}[1]{|#1\rangle}
\newcommand{\bra}[1]{\langle#1|}
\newcommand{\mB}{\mathcal{B}}
\newcommand{\ens}{\mathcal{E}}
\newcommand{\psucRAC}{P^{\rm cert}}
\newcommand{\psuc}{P^{\rm succ}}
\newcommand{\pgame}{{P^{\rm game}(\setS,\setT,\sigma_{AB})}}
\newcommand{\pgameMAX}{{P^{\rm game}_{\rm max}}}
\newcommand{\cU}{\mathcal{U}}
\newcommand{\hmin}{\ensuremath{H}_{\infty}}
\newcommand{\zetastr}{\zeta_\str}
\newcommand{\zetamax}{\zeta_\str}
\newcommand{\zetamaxx}{\zeta_\xstr}
\newcommand{\zetaset}{\zeta_\str^\Sigma}
\newcommand{\zetasteer}{\zeta_\str^{\sigma_{s,a}}}
\newcommand{\stateSet}{\mathscr{S}}
\newcommand{\xstr}{{\vec{x}_{s,a}}}
\newcommand{\str}{{\vec{x}}}
\newcommand{\vstr}{x}
\newcommand{\vxstr}{{x_{s,a}}}
\newcommand{\prob}[1]{p(#1)}
\newcommand{\xor}{XOR}
\newcommand{\unop}{Q}
\newcommand{\qBob} {{\mathcal{D}}}
\newcommand{\setA}{\mathcal{A}}
\newcommand{\setB}{\mathcal{B}}
\newcommand{\setS}{\mathcal{S}}
\newcommand{\setT}{\mathcal{T}}
\newcommand{\setTopt}{\mathcal{T}_{\rm opt}}
\newcommand{\post}{\tau_{t,b}}
\newcommand{\ppost}[1]{{p(#1)_{\post}}}
\begin{document} 
\maketitle 
\begin{sciabstract}
Two central concepts of quantum mechanics are Heisenberg's uncertainty
principle, and 
a subtle form of non-locality that Einstein famously
called ``spooky action at a distance''.  
These two fundamental features have thus far been distinct concepts.
Here we show that they are inextricably and quantitatively linked.  
Quantum mechanics cannot be 
more non-local with measurements that respect the uncertainty principle.  
In fact,
the link between uncertainty and non-locality holds for all physical theories.
More specifically, the degree of non-locality of any theory is determined
by two factors -- 
the strength of the uncertainty principle, 
and the strength of a property
called ``steering'', which determines which
states can be prepared at one location given a measurement at another.

\end{sciabstract}

A measurement allows us to gain information about the state of a physical system.
For example when measuring the position of a particle, the measurement outcomes correspond to possible locations.
Whenever the state of the system is such that we can predict this position with certainty, then there is 
only one measurement outcome that can occur.
Heisenberg~\cite{heisenberg:uncertainty} observed that quantum mechanics
imposes strict restrictions on what we can hope to learn -- there are incompatible measurements
such as position and momentum whose results cannot be simultaneously predicted 
with certainty. These restrictions are known as uncertainty relations.
For example, uncertainty relations tell us that if we were able to predict
the momentum of a particle with certainty, then when measuring its position all measurement
outcomes occur with equal probability. That is, we are completely uncertain about its location.

Non-locality can be exhibited when performing measurements on two or more distant quantum systems -- the outcomes
can be correlated in way that defies any local classical description~\cite{bell:epr}.  This is why we know that quantum theory will never by superceded by a local classical theory.
Nevertheless, even quantum correlations are restricted
to some extent -- measurement results cannot be correlated
so strongly that they would 
allow signalling between two distant systems.  However
quantum correlations are still weaker
than what the no-signalling principle demands~\cite{PR,PR1,PR2}.  So, why are
quantum correlations strong enough to be non-local, yet not as strong
as they could be?  Is there a principle that determines the degree of this non-locality?
Information-theory~\cite{gs:relaxedUR,infoCausality},
communication complexity~\cite{wim:nonlocal}, 
and local quantum mechanics~\cite{s3:nonlocal}
provided us with some rationale why 
limits on quantum theory may exist.
But evidence suggests that many of these attempts
provide only partial answers.
Here, we take a very different approach and relate the limitations of non-local correlations to two 
inherent properties of any physical theory.

\section*{Uncertainty relations}
At the heart of quantum mechanics lies 
Heisenberg's uncertainty principle~\cite{heisenberg:uncertainty}.
Traditionally, uncertainty relations have been expressed
in terms of commutators
\begin{align}
	\Delta A \Delta B \geq  \ \frac{1}{2} |\bra{\psi}[A,B]\ket{\psi}|.
\end{align}
with 
standard deviations $\Delta X = \sqrt{\bra{\psi}X^2\ket{\psi} - \bra{\psi}X\ket{\psi}^2}$ for $X \in \{A,B\}$.
However, the more modern approach is to use entropic
measures.
Let $p(x^{(t)}|t)_\sigma$ denote the probability 
that we obtain outcome $x^{(t)}$ when performing a measurement labelled $t$ when the system
is prepared in the state $\sigma$.
In quantum theory, $\sigma$ is a density operator, while for a general physical theory, we assume that $\sigma$
is simply an abstract representation of a state.
The well-known Shannon entropy $H(t)_\sigma$ of the distribution over measurement outcomes of measurement $t$ on a system in state $\sigma$ is thus
\begin{align}
H(t)_\sigma := - \sum_{x^{(t)}} p(x^{(t)}|t)_\sigma \log p(x^{(t)}|t)_\sigma\ .
\end{align}
In any uncertainty relation, we wish to compare outcome distributions for multiple measurements. In terms of
entropies such relations are of the form
\begin{align}
\sum_{t \in \setT} 
p(t)\ H(t)_\sigma
&\geq c_{\setT,\qBob} ,
\label{eq:entropicUR}
\end{align}
where 
$p(t)$ is any probability distribution over the set of measurements $\setT$,
and $c_{\setT,\qBob}$ is some positive constant determined by $\setT$ and the distribution $\qBob = \{p(t)\}_t$. 
To see why~\eqref{eq:entropicUR} forms an uncertainty relation, note that whenever $c_{\setT,\qBob}> 0$
we cannot predict the outcome of at least one of the measurements $t$ with certainty, i.e., $H(t)_\sigma > 0$.
Such relations 
have the great advantage 
that the lower bound $c_{\setT,\qBob}$ does not depend on the state $\sigma$~\cite{deutsch:uncertainty}. Instead,
$c_{\setT,\qBob}$ depends only on the measurements and hence quantifies their
inherent incompatibility. It has been shown that for two measurements,
entropic uncertainty relations do in fact imply Heisenberg's uncertainty relation~\cite{bm:uncertainty}, providing
us with a more general way of capturing uncertainty (see~\cite{ww:URsurvey} for a survey).

One may consider many entropic measures instead of the Shannon entropy. For example, the min-entropy
\begin{align}
\hmin(t)_\sigma := - \log \max_{x^{(t)}} p(x^{(t)}|t)_\sigma\ 
\end{align}
used in~\cite{deutsch:uncertainty}, plays an important role in cryptography 
and provides a lower bound on $H(t)_\sigma$.
Entropic functions are, however, a rather coarse way of measuring the uncertainty of a set of measurements,
as they
do not distinguish the uncertainty inherent in obtaining any combination of outcomes $\vstr^{(t)}$ for different measurements $t$.
It is thus
useful to consider much more fine-grained uncertainty relations consisting of a series of inequalities, one
for each combination of possible outcomes, which we write as a string
$\str = (\vstr^{(1)},\ldots,\vstr^{(n)}) \in \setB^{\times n}$ with $n = |\setT|$~\footnote{Without loss of generality we assume that each measurement has the same set of possible outcomes, since we may simply add additional outcomes which never occur.}.
That is, for each $\str$, a set of measurements $\setT$, and distribution $\qBob = \{p(t)\}_t$, 
\begin{align}\label{eq:uncertRelOne}
	\psucRAC(\sigma;\str) := \sum_{t=1}^n p(t)\ p(x^{(t)}|t)_\sigma \leq \zetastr(\setT,\qBob)\ .
\end{align}

For a fixed set of measurements, the set of inequalities
\begin{align}\label{eq:urRelations}
\cU = \left\{\sum_{t=1}^{n}  p(t)\ p(x^{(t)}|t)_\sigma\ \leq \zetastr \mid \forall \str \in \setB^{\times n} \right\}\ ,
\end{align}
thus forms a fine-grained uncertainty relation, as it dictates that 
one cannot obtain a measurement outcome with certainty
for all measurements simultaneously whenever
$\zetastr < 1$.  
The values of $\zetastr$
thus confine the set of 
allowed probability distributions,
and the measurements have uncertainty if 
$\zetastr < 1$ for all $\str$. 
To characterise the ``amount of uncertainty'' in a particular physical theory, we are thus interested in the values of
\begin{align}
\zetamax = \max_{\sigma} \sum_{t=1}^n p(t) p(x^{(t)}|t)_\sigma\
\label{eq:urRelationsmax}
\end{align}
where the maximization is taken over all states allowed on a particular 
system (for simplicity, we assume it can be attained in the theory considered).
We will also refer to the state $\rho_{\str}$ that attains the maximum as a ``maximally certain state''.  However, we will also be interested in the degree of uncertainty exhibited by
measurements on a set of states $\Sigma$ quantified by
$\zetaset$ defined with the maximisation in Eq. \ref{eq:urRelationsmax} taken over ${\sigma \in \Sigma}$.
Fine-grained uncertainty relations are directly related
to the entropic ones, and have both a physical and an information processing appeal (see appendix).   As an example, consider 
the binary spin-$1/2$ observables $Z$ and $X$.  If 
we can obtain a particular outcome $x^{(Z)}$ given that we
measured $Z$ with certainty i.e. $p(x^{(Z)}|Z)=1$, then the complementary
observable must be completely uncertain i.e.  $p(x^{(X)}|X)=1/2$.
If we choose which
measurement to make with probability $1/2$ then      
this notion of uncertainty is captured by the relations
\begin{align}
\frac{1}{2}p(x^{(X)}|X)+\frac{1}{2}p(x^{(Z)}|Z)\leq
\zetamax
= \frac{1}{2} + \frac{1}{2\sqrt{2}} \mbox{ for all } \str=(x^{(X)},x^{(Z)})  \in \01^2\ ,
\label{eq:uncertaintyexample}
\end{align}
where the maximally certain states are given by the eigenstates of $(X+Z)/\sqrt{2}$ and
$(X - Z)/\sqrt{2}$.

\section*{Non-local correlations}
We now introduce the concept of non-local correlations.
Instead of considering measurements on a single system we consider
measurements on two (or more)
space-like separated systems traditionally
named Alice and Bob.
We label Bob's measurements using $t \in \setT$, and use $b \in \setB$ to label his measurement outcomes.
For Alice, we use $s \in \setS$ to label her measurements, and $a \in \setA$ to label her outcomes (recall, that wlog we can assume all measurements have the same number of outcomes).
When Alice and Bob perform measurements on a shared state $\sigma_{AB}$
the outcomes of their measurements can be correlated. 
Let $p(a,b|s,t)_{\sigma_{AB}}$ be the joint probability that they obtain outcomes $a$ and $b$ for measurements
$s$ and $t$. We can now again ask ourselves, 
what correlations are possible in nature? In other words,
what probability distributions are allowed?

Quantum mechanics as well as classical mechanics obeys the no-signalling principle, meaning that information cannot travel faster
than light. This means that the probability that Bob obtains outcome $b$ when performing measurement $t$ cannot depend on which measurement
Alice chooses to perform (and vice versa).
More formally, $\sum_{a} p(a,b|s,t)_{\sigma_{AB}}= p(b|t)_{\sigma_B}$ for all $s$, where $\sigma_B = \tr_A(\sigma_{AB})$.
Curiously, however, this is not the only constraint imposed by quantum mechanics~\cite{PR},
and finding other constraints is a difficult task~\cite{acin:bell,as:SDPhierarchy}.
Here, we find that the uncertainty principle imposes the other limitation.

To do so, let us recall the concept of so-called Bell inequalities~\cite{bell:epr},
which are
used to describe limits on such joint probability distributions. 
These are most easily explained in their more modern form in terms of a game played between Alice and Bob.
Let us choose questions $s \in \setS$ and $t \in \setT$ according to some distribution $p(s,t)$ and send them to Alice
and Bob respectively, where we take for simplicity $p(s,t) = p(s) p(t)$.
The two players now return answers $a \in \setA$ and $b \in \setB$. 
Every game comes with a set of rules that determines whether $a$ and $b$ are winning answers given questions
$s$ and $t$. Again for simplicity, we thereby assume 
that for every $s$,$t$ and $a$, there exists exactly one winning
answer $b$ for Bob (and similarly for Alice). 
That is, for every setting $s$ and outcome $a$ of Alice
there is a string 
$\xstr = (\vxstr^{(1)},\ldots,\vxstr^{(n)}) \in \setB^{\times n}$ of length $n = |\setT|$ 
that determines the correct answer $b = \vxstr^{(t)}$ for question $t$ for Bob (Fig.~\ref{fig:gameRules}). We say that $s$ and $a$ determine a ``random access coding''~\cite{as:dimBound}, meaning that Bob is not trying to learn the full
string $\xstr$ but only the value of one entry.
The case of non-unique games is a straightforward but cumbersome generalisation.

As an example, the Clauser-Horne-Shimony-Holt (CHSH) inequality~\cite{chsh:inequality}, one of the simplest
Bell inequalities whose violation implies non-locality, can be expressed as a game
in which Alice and Bob receive
binary questions $s, t \in \01$ respectively, and similarly their answers
$a, b\in \01$ are single bits.  Alice and Bob win the CHSH game if their answers satisfy
$a \oplus b = s \cdot t$. 
We can label Alice's outcomes using string $\xstr$ and Bob's goal is to retrieve the $t$-th element of this string. 
For $s=0$, Bob will always need to give the same answer as Alice in order to win 
and hence we have
$\vec{x}_{0,0} = (0,0)$, and $\vec{x}_{0,1} = (1,1)$. For $s=1$, Bob needs to give the same answer for $t=0$, but the opposite
answer if $t=1$. That is, $\vec{x}_{1,0} = (0,1)$, and $\vec{x}_{1,1} = (1,0)$.

\begin{figure}
\begin{center}
\includegraphics[scale=1.5]{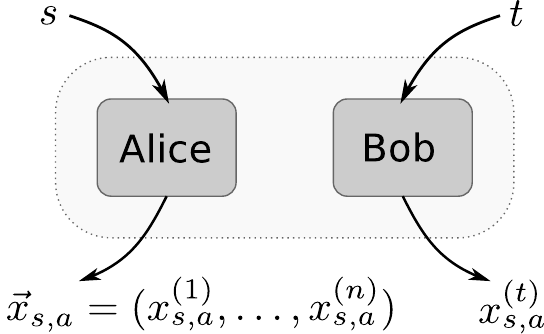}
\caption{Any game where for every choice of settings $s$ and $t$, and answer $a$ of Alice, there is only one 
correct answer $b$ for Bob can be expressed using strings $\xstr$ such that $b = \vxstr^{(t)}$ is 
the correct answer for Bob~\cite{as:dimBound} for $s$ and $a$. The game is equivalent to
Alice outputting $\xstr$ and Bob outputting $\vxstr^{(t)}$.}
\label{fig:gameRules}
\end{center}
\end{figure}

Alice and Bob may agree on any strategy ahead of time, but once the game
starts their physical distance prevents them from communicating.
For any physical theory, such a strategy consists of a choice of shared state $\sigma_{AB}$, as
well as measurements, where we may without loss of generality assume that the players
perform a measurement depending on the question they receive and return the outcome of said measurement
as their answer. 
For any particular strategy, we may hence write the probability that Alice and Bob 
win the game as
\begin{align}
\pgame
= \sum_{s,t} p(s,t) \sum_{a} 
p(a,b = \vxstr^{(t)}|s,t)_{\sigma_{AB}}
\label{eq:pgame}
\end{align}

To characterize what distributions are allowed, we are generally interested in the winning probability maximized
over all possible strategies for Alice and Bob
\begin{align}
\pgameMAX =
\max_{\setS,\setT,\sigma_{AB}} \pgame\ ,
\end{align}
which in the case of quantum theory, we refer to as a Tsirelson’s type bound
for the game~\cite{tsirel:original}. For the CHSH inequality, we have 
$\pgameMAX = 3/4$ classically, $\pgameMAX = 1/2 +1(2\sqrt{2})$ quantum mechanically,
and $\pgameMAX= 1$ for a theory allowing any nonsignalling correlations. 
$\pgameMAX$ quantifies the strength of nonlocality for any theory, with the
understanding that a theory possesses genuine nonlocality
when it differs from the value that can be achieved classically.
The connection we will demonstrate between uncertainty relations
and nonlocality holds even before this optimization.

\section*{Steering}
The final concept, we need in our discussion is steerability, which determines
what states Alice can prepare on Bob's system remotely. Imagine Alice and Bob share a state $\sigma_{AB}$, 
and consider the reduced state $\sigma_B = \tr_A(\sigma_{AB})$ on Bob's side.
In quantum mechanics, as well as many other theories in which Bob's state space is a convex set $\stateSet$,
the state $\sigma_B \in \stateSet$ can be decomposed in many different ways as a convex sum
\begin{align}
\sigma_B=\sum_a p(a|s)\ \sigma_{s,a} \mbox{ with } \sigma_{s,a} \in \stateSet\ ,
\label{eq:nosig}
\end{align}
corresponding to an ensemble $\mathcal{E}_s = \{p(a|s),\sigma_{s,a}\}_a$.
It was Schr\"{o}dinger~\cite{schroedinger:steering,schroedinger:steering2} who noted that in quantum mechanics for all $s$
there exists a measurement on Alice's system that allows her to prepare $\mathcal{E}_s = \{p(a|s), \sigma_{s,a}\}_a$ on Bob's site (Fig.~\ref{fig:steering}).
That is for measurement $s$, Bob's system will be in the state $\sigma_{s,a}$ with probability $p(a|s)$.  Schr\"{o}dinger called this steering to the ensemble $\mathcal{E}_s$ and it does not violate the no-signalling principle,
because for each of Alice's setting, the state of Bob's system
is the same once we average over Alice's measurement outcomes.

Even more, he observed that for any set of ensembles
$\{\mathcal{E}_s\}_s$ 
that respect the no-signalling constraint, i.e.,
for which there exists a state $\sigma_B$ such that Eq. \ref{eq:nosig} holds,
we can in fact find a bipartite quantum state $\sigma_{AB}$ and measurements that allow Alice to steer to such ensembles.
\begin{figure}
\begin{center}
\includegraphics[scale=1.5]{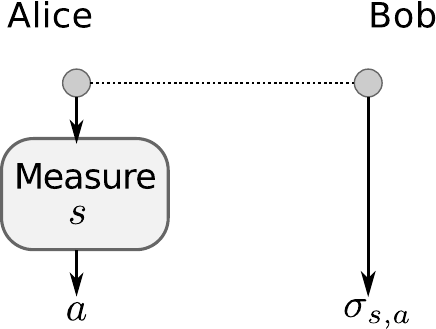}
\caption{When Alice performs a measurement labelled $s$, and obtains outcome $a$ with probability $p(a|s)$ she effectively 
prepares the state $\sigma_{s,a}$ on Bob's system.  This is known as ``steering''.}
\label{fig:steering}
\end{center}
\end{figure}
We can imagine theories in which our ability to steer is either more, or maybe even
less restricted (some amount of signalling is permitted).
Our notion of steering thus allows forms of steering not considered
in quantum mechanics~\cite{schroedinger:steering,schroedinger:steering2,andrew:steering1,andrew:steering2} 
or other restricted classes of theories~\cite{barnum2009ensemble}.
Our ability to steer, however, is a property of the set of ensembles we consider, and not a property
of one single ensemble.

\section*{Main result}

We are now in a position to derive the relation between how 
non-local a theory is, and how uncertain it is.
For any theory, we will find that the uncertainty relation for Bob's measurements (optimal or otherwise) acting on the states that Alice can steer to is what determines the strength of non-locality (See Eq. (\ref{eq:optimalpgame})).  We then find that in quantum mechanics,
for all games where the optimal measurements are known, the states which Alice can steer to are identical to the most certain states, and it is thus only the uncertainty relations of Bob's measurements which determine the outcome (see Eq. (\ref{eq:XORexact})).

First of all, note that we may express the probability~\eqref{eq:pgame} that Alice and Bob win the game using a particular strategy (Fig~\ref{fig:gameRules}) as
\begin{align}\label{eq:gameRewrite}
\pgame
= \sum_s p(s) \sum_a p(a|s) \psucRAC(\sigma_{s,a};\xstr)\ ,
\end{align}
where $\sigma_{s,a}$ is the reduced state of Bob's system for setting $s$ and outcome $a$ of Alice,
and $p(t)$ in $\psucRAC(\cdot)$ is the probability distribution over Bob's questions $\mathcal{T}$ in 
the game. 
This immediately suggests that there is indeed a close connection between games and our fine-grained uncertainty relations.
In particular, every game can be understood as giving rise to a set of uncertainty relations and vice versa.
It is also 
clear from Eq.~\ref{eq:uncertRelOne} for Bob's choice of measurements $\setT$ and distribution $\qBob$ over $\setT$
that the strength of the uncertainty relations imposes an upper bound on the winning probability
\begin{align}\label{eq:gameRewriteURBound}
\pgame
&\leq \sum_s p(s) \sum_a p(a|s)\ \zeta_{\xstr}(\setT,\qBob)
 \leq \max_{\xstr} \zeta_{\xstr}(\setT,\qBob)\ ,
\end{align}
where we have made explicit the functional dependence of $\zetastr$ on the set of measurements.
This seems curious given that we know from~\cite{wolf:ur} that any two incompatible
observables lead to a violation of the CHSH inequality, and that to achieve Tsirelson's bound
Bob has to measure maximally incompatible observables~\cite{peres:book} that yield stringent
uncertainty relations.
However, from Eq.~\ref{eq:gameRewriteURBound} we may be tempted to conclude that for any theory it would be in Bob's 
best interest to perform measurements that are very compatible and have weak uncertainty relations in 
the sense that the values $\zeta_{\xstr}$ can be very large. 
But, can Alice prepare states $\sigma_{s,a}$ that
attain $\zeta_{\xstr}$ for any choice of Bob's measurements?

The other important ingredient in understanding non-local games is thus steerability. We can think
of Alice's part of the strategy $\setS,\sigma_{AB}$ as preparing the ensemble $\{p(a|s),\sigma_{s,a}\}_a$ on Bob's system, whenever she receives question $s$.
Thus when considering the optimal strategy for non-local games, 
we want to maximise $\pgame$ over all ensembles $\ens_s = \{p(a|s),\sigma_{s,a}\}_a$ that
Alice can steer to, and use the optimal measurements $\setTopt$ for Bob. That is,
\begin{align}
\pgameMAX = \max_{\{\ens_s\}_s}
\sum_s p(s) \sum_a p(a|s)\ \zetasteer(\setTopt,\qBob)\ ,
\label{eq:optimalpgame}
\end{align}
and hence the probability that Alice and Bob win the game depends only on the strength of the uncertainty relations
with respect to the sets of steerable states.
To achieve the upper bound given by Eq.~\ref{eq:gameRewriteURBound} Alice needs to be able to prepare the ensemble $\{p(a|s),\rho_{\xstr}\}_a$ of
maximally certain states on Bob's system. 
In general, it is not clear that the 
maximally certain states
for the measurements which are optimal for Bob in the game can be steered to.

It turns out that in quantum mechanics, this can be achieved in cases where
we know the optimal strategy.
For all \xor\ games~\footnote{In an XOR game the answers $a \in \01$ and $b \in \01$ of Alice and Bob respectively are single bits, and the decision whether Alice and Bob win or not depends only on the XOR of the answers $a \oplus b = a + b \mod 2$.},
that is correlation inequalities for two outcome observables 
(which include CHSH as a special case), as well as 
other games 
where the optimal measurements are known
we find that
the states which are maximally certain can be steered to (see appendix).
The uncertainty relations for Bob's optimal measurements thus give a tight bound
\begin{align}\label{eq:XORexact}
\pgameMAX = \sum_s p(s) \sum_a p(a|s) \zetamax(\setTopt,\qBob)\ ,
\end{align}
where we recall that $\zetamax$ is the bound given by the maximization over 
the full set of allowed states on Bob's system. It is an open 
question whether this holds for all games in quantum mechanics. 

An important consequence of this is that any theory that allows Alice and Bob to win with a probability exceeding
$\pgameMAX$ requires measurements which do not respect
the fine-grained uncertainty relations given by $\zetamax$ for the 
measurements used by Bob (the same
argument can be made for Alice). 
Even more, it can lead to a violation of the corresponding min-entropic uncertainty relations (see appendix).
For example, if quantum mechanics were to violate the Tsirelson bound using the same measurements for Bob,
it would need to violate the min-entropic uncertainty relations~\cite{deutsch:uncertainty}.
This relation holds even if Alice and Bob were to perform altogether different measurements when winning the game
with a probability exceeding $\pgameMAX$: for these new measurements there exist analogous uncertainty relations on the set $\Sigma$ of steerable states, 
and a higher winning probability thus always leads to a violation of such an uncertainty relation.
Conversely, 
if a theory allows any states violating even one of these fine-grained uncertainty relations for Bob's (or Alice's) optimal measurements
on the sets of steerable states, then Alice and Bob are able to violate the Tsirelson's type bound for the game. 

Although the connection between non-locality and uncertainty is more general, 
we examine the example of the CHSH inequality in detail to gain some intuition on how uncertainty
relations of various theories determine the extent to which
the theory can violate Tsirelson's bound (see appendix).
To briefly summerize, in quantum theory, Bob's optimal measurements are $X$ and $Z$ which have
uncertainty relations given by $\zetamaxx=1/2+1/(2\sqrt{2})$ of Eq. \ref{eq:uncertaintyexample}.  Thus, if Alice could steer to the maximally certain
states for these measurements, they would be able to achieve a winning
probability given by $\pgameMAX=\zetamaxx$ i.e. the degree of non-locality 
would be determined by the uncertainty relation.  This is the case -- if
Alice and Bob share the singlet state then Alice can steer to the maximally 
certain states by  measuring in the basis given by the  
eigenstates of $(X + Z)/\sqrt{2}$ or of $(X - Z)/\sqrt{2}$.
For quantum mechanics, our ability to steer is only limited by the no-signalling
principle, but we encounter strong uncertainty relations. 

On the other hand, for a local hidden variable
theory, we can also have perfect steering, but only with
an uncertainty relation given
by $\zetamaxx=3/4$, and thus we also have the degree of non-locality
given by $\pgameMAX=3/4$.  This value of non-locality
is the same as deterministic classical mechanics where we have no uncertainty relations on the full set of deterministic states, but our abilities
to steer to them are severely limited.
In the other direction, there are theories
that are maximally non-local, yet 
remain no-signalling~\cite{PR}.  These have no uncertainty, i.e. $\zetamaxx=1$, but unlike in the classical world we still have perfect
steering, so they win the CHSH game with probability $1$.

For any physical theory we can thus consider the strength of non-local correlations to be a tradeoff
between two aspects: steerability and uncertainty. 
In turn, the strength of 
non-locality can determine
the strength of uncertainty in measurements. However, it does not determine the strength of complementarity of measurements (see appendix).
The concepts of uncertainty and complementarity are usually linked, but
we find that one can have theories that are just as non-local and uncertain
as quantum mechanics, but that have less complementarity.  This suggests
a rich structure relating these quantities, which may be elucidated by further research in the direction suggested here.

\myacknowledgments
The retrieval game used was discovered in
collaboration with Andrew Doherty.
JO is supported by the Royal Society.
SW is supported by NSF grants PHY-04056720 and PHY-0803371,  
the National Research Foundation and the Ministry of Education, Singapore.
Part of this work was done while JO was visiting
Caltech (Pasadena, USA), and while JO and SW were visiting KITP (Santa Barbara, USA) funded
by NSF grant PHY-0551164.

\bibliographystyle{plain}

\appendix

\section*{Supplementary Material}

In this appendix we provide the technical details used or discussed in the main part.
In Section \ref{sec:xor} we 
show 
that for the case of XOR games, Alice can always steer to the maximally certain states of Bob's optimal
measurement operators.
That is, the relation between 
uncertainty relations and non-local games does not depend on any additional steering constraints.  
Hence, a violation of the Tsirelson's bound implies a violation of the corresponding
uncertainty relation.  
Conversely, 
a violation of 
the uncertainty relation leads to a violation of the Tsirelson bound as long 
the theory allows Alice to steer to the new maximally certain states.
The famous CHSH game is a particular example of an XOR game,
and in Sections \ref{sec:retrieval} and \ref{sec:minentropic}
we find that 
Tsirelson's bound~\cite{tsirel:original} is violated if and only if Deutsch' 
min-entropic uncertainty relation~\cite{deutsch:uncertainty} 
is violated, whenever steering is possible. 
In fact, for an even wider class of games called {\it retrieval games}
a violation of Tsirelson's bound implies 
a violation of the min-entropic uncertainty relations 
for Bob's optimal measurement derived in~\cite{ww:cliffordUR}.

In Section~\ref{sec:minentropic} we show that 
our fine-grained uncertainty relations are not only directly related to entropic uncertainty relations
\begin{align}
\sum_{t=1}^n \hmin(t)_\sigma \geq - \log \max_{\str} \zetamax\ ,
\end{align}
but they are particularly appealing from both a physical, as well as an information processing perspective:
 For measurements in a full set of
so-called mutually unbiased bases, the $\zetamax$ are the extrema of the discrete Wigner function used
in physics~\cite{wootters:wigner}. 
From an information processing perspective, we may think of the string $\str$ as being encoded into
quantum states $\sigma_{\str}$ where we can perform the measurement $t$ to retrieve the $t$-th entry from $\str$. Such an encoding is
known as a random access encoding of the string $\str$~\cite{nayak:original,nayak:rac}. 
The value of $\zeta_\str$ can now be understood as an upper bound
on the probability
that we retrieve the desired entry correctly, averaged over our choice of $t$ which is just $\psucRAC(\sigma_{\str})$. This upper bound
is attained by the maximally certain state $\rho_{\vec{x}}$.
Bounds on the performance of random access encodings thus translate directly into uncertainty relations and vice versa.

In Section~\ref{sec:examples} we discuss a number of example theories in the context of CHSH
in order to demonstrate the interplay between uncertainty and non-locality.
This includes quantum mechanics, classical mechanics, local hidden variable models and theories
which are maximally non-local but still non-signalling. 
In Section \ref{sec:compl}, we show that although
non-locality might determine the extent to which measurements
in a theory are uncertain, it does
not determine how complementary the measurements are.  We distinguish
these concepts and give an example
in the case of the CHSH inequality, of a theory which is just as non-local
and uncertain as quantum mechanics, but where the measurements are less
complementary.
Finally, in Section \ref{sec:URtoNL} we show how the form of any uncertainty relation can be used to construct 
a non-local game. 
\appendix
\section{XOR-games}
\label{sec:xor}
In this section, we concentrate on showing that for the case of XOR-games, the relation between 
uncertainty relations and non-local games does not involve steering constraints, since the steering is only limited by the no-signalling condition.  Quantum mechanics
allows Alice to steer Bob's states to those which are maximally certain for his
optimal measurement settings.
For quantum mechanics to become more non-local, it must have weaker
uncertainty relations for steerable states. 

First of all, let us recall some important facts about \xor\ games. 
These games are equivalent to Bell inequalities for observables with two outcomes i.e.  
bipartite correlation inequalities for dichotomic observables.  They
form the only general class of games which are truly understood at present. Not only do we know the structure of the measurements that 
Alice and Bob will perform,
but we can also find them efficiently using semidefinite programming for any XOR game~\cite{wehner05d}, 
which remains a daunting problem for general games~\cite{as:SDPhierarchy, acin:bell,acin:bell2}.

\subsection{Structure of measurements}

We first recall the structure of the optimal measurements used by Alice and Bob.
In an XOR game, the answers $a \in \01$ and $b \in \01$ of Alice and Bob respectively are single bits, and the decision whether Alice and Bob win or not depends only on the XOR
of the answers $a \oplus b = a + b \mod 2$. 
The winning condition can thus be written in terms of a predicate that obeys
$V(c = a\oplus b|s,t) = 1$ if and only if $a \oplus b = c$ are winning answers for Alice and Bob given 
settings $s$ and $t$ (otherwise $V(c|s,t) = 0$).
Let $A_s^a$ and $B_t^b$ denote the measurement operators corresponding to measurement settings $s$ and $t$ giving outcomes
$a$ and $b$ of Alice and Bob respectively. Without loss of generality, we may thereby assume that these are projective measurements
satisfying $A_s^a A_s^{a'} = \delta_{a,a'} A_s^a$, and similarly for Bob.
Since we have only two measurement outcomes, we may view this measurement as measuring the observables
\begin{align}
A_s &=A_s^0 - A_s^1\ , \label{eq:AliceObservable}\\
B_t &=B_t^0 - B_t^1\ , \label{eq:BobObservable}
\end{align}
with outcomes $\pm 1$ where we label the outcome '$+1$' as '$0$', and the outcome '$-1$' as '$1$'.
Tsirelson~\cite{tsirel:original,tsirel:separated} has shown that the optimal winning probability for Alice and Bob 
can be achieved using traceless observables of the form
\begin{align}
A_s &= \sum_{j} a_s^{(j)} \Gamma_j\ ,\label{eq:AliceXORoperators}\\
B_t &= \sum_{j} b_t^{(j)} \Gamma_j\ ,\label{eq:BobXORoperators}
\end{align}
where $\vec{a}_s = (a_s^{(1)},\ldots,a_s^{(N)}) \in \Real^N$ and $\vec{b}_t = (b_t^{(1)},\ldots,b_t^{(N)}) \in \Real^N$ 
are real unit vectors of dimension $N = \min(|\setS|,|\setT|)$ and $\Gamma_1,\ldots,\Gamma_N$ are the anti-commuting generators of a Clifford
algebra. That is, we have
\begin{align}
\{\Gamma_j,\Gamma_k\} &= \Gamma_j \Gamma_k + \Gamma_k \Gamma_j = 0 \mbox{ for all } j\neq k\ ,\label{eq:cliffordAntiComm}\\
(\Gamma_j)^2 &= \id \mbox{ for all } j\label{eq:cliffordNormalized}
\end{align}
In dimension $d = 2^{n}$ we can find at most $N = 2n+1$ such operators, which can be obtained using the well known Jordan-Wigner transform~\cite{JordanWigner}.
Since we have 
$A_s^0 + A_s^1 = \id$ and
$B_t^0 + B_t^1 = \id$
we may now use Eqs.~\ref{eq:AliceObservable} and~\ref{eq:BobObservable} 
to express the measurement operators in terms of the observables as 
\begin{align}
A_s^a & = \frac{1}{2}\left(\id + (-1)^a A_s\right)\ ,\label{eq:AliceMeasureOps}\\
B_t^b &= \frac{1}{2}\left(\id + (-1)^b B_t\right)\ .\label{eq:BobMeasureOps}
\end{align}
Tsirelson furthermore showed that the optimal state that Alice and Bob use with said observables is
the maximally entangled state
\begin{align}
\ket{\psi} = \frac{1}{\sqrt{d}}
\sum_{k=0}^{d-1} \ket{k}\ket{k}
\end{align}
of total dimension $d = (2^{\floor{N/2}})^2$.

To see how XOR games are related to Bell correlation inequalities note that we may use Eqs.~\ref{eq:AliceMeasureOps} and~\ref{eq:BobMeasureOps}
to rewrite the winning probability of the game in terms of the observables as
\begin{align}
\pgame &= \sum_{s,t} p(s,t) \sum_{a,b} V(a,b|s,t) \bra{\psi}A_s^a \otimes B_t^b\ket{\psi}\\
&= \frac{1}{2}
\left(\sum_{s,t} p(s,t) \sum_c V(c|s,t)
\left(1 + (-1)^c \bra{\psi}A_s \otimes B_t\ket{\psi}\right)\right)\ .
\end{align}
For example, for the well known CHSH inequality given by
\begin{align}
\bra{\psi}CHSH \ket{\psi}
&= \bra{\psi} A_0 \otimes B_0 + A_0 \otimes B_1 + A_1 \otimes B_0 - A_1 \otimes B_1\ket{\psi}\ 
\end{align}
with $p(s,t) = p(s) p(t) = 1/4$ we have 
\begin{align}
\pgame &= \frac{1}{2} \left(1 + \frac{\bra{\psi} CHSH \ket{\psi} }{4}\right)\\
&= \frac{1}{4} \sum_{s,t} \sum_{a,b} V(a,b|s,t) \bra{\psi} A_s^a \otimes B_t^b \ket{\psi}\ ,
\end{align}
where $V(a,b|s,t) = 1$ if and only if $a \oplus b = s \cdot t$.

\subsection{Steering to the maximally certain states}

Now that we know the structure of the measurements that achieve the maximal winning probability in an XOR game, we can
write down the corresponding uncertainty operators for Bob 
(the case for Alice is analogous). Once we have done that,
we will see that it is possible for Alice to steer Bob's state to the maximally certain ones.
First of all, note that in the quantum case
we have that the set of measurement operators
$\{B_t\}_t$ is each decomposable in terms of POVM elements $\{B_t^b\}_b$, 
and thus
$\zetamax$
corresponds to the largest eigenvalue of the \emph{uncertainty operator}
\begin{align}
\unop_{\xstr}=\sum_{t} \prob{t}  B_t^{x_{s,a}^{(t)}}\ .
\label{eq:qunop}
\end{align}
The $\unop$ are positive operators, so sets of them, together with an error
operator form a positive valued operator
and can be measured.
In the case of XOR games, we thus have
\begin{align}
\unop_{\xstr}=\sum_{t} \prob{t} \sum_{\substack{b\\ V(a,b|s,t) = 1}} B_t^b\ ,
\label{eq:unop}
\end{align}
and the Bell polynomial can be expressed as
\begin{align}
\beta &= \sum_{stab}\prob{s}\prob{t}V(ab|st) A_s^a\otimes B_t^b\\ 
&= \sum_{sa}  \prob{s}A_s^a\otimes \unop_{\xstr}\ .
\end{align}
For the special case of XOR-games we can now use Eq.~\ref{eq:BobMeasureOps} to rewrite Bob's uncertainty operator using the short
hand 
$\vec{v} \cdot \vec{\Gamma} = \sum_j v^{(j)} \Gamma_j$ as
\begin{align}
\unop_{\xstr} &= \frac{1}{2} \sum_t \prob{t} \sum_{\substack{b\\V(a,b|s,t) = 1}} \left(\id + (-1)^b B_t\right)\nonumber\\
&= \frac{1}{2}\left[c_{s,a} \id + \vec{v}_{s,a} \cdot \vec{\Gamma}\right]
\label{eq:uncertRewrite}\ ,
\end{align}
where 
\begin{align}
c_{s,a} &= \sum_t \prob{t} |\sum_b V(a,b|s,t)|\, \\
\vec{v}_{s,a}  &= \sum_t \prob{t} \sum_{\substack{b\\V(a,b|s,t) = 1}}
(-1)^b\ \vec{b}_t
\end{align}

What makes XOR games so appealing, is that we can now easily compute the maximally certain states. 
Except for the case of two measurements, and rank one measurement operators
this is generally not so easy to do, even though bounds may of course be found. Finding a general expression for such maximally certainty states 
allows us 
easily to compute
the extrema of the discrete Wigner function as well as
to prove
tight entropic uncertainty relations for all measurements.
This has been an open problem since Deutsch's initial bound for two measurements 
using rank one operators~\cite{deutsch:uncertainty,ww:URsurvey}.
In particular, we recall for completeness~\cite{ds:pistar} that

\begin{claim}
\label{claim:uncertaintyVectors}
For any XOR game, we have that for any uncertainty operator $\unop_{\xstr}$ of observables of the 
form Eqs.~\ref{eq:AliceXORoperators} and~\ref{eq:BobXORoperators}
\begin{align}
\zetamaxx =
\max_{\substack{\rho \geq 0\\\tr{\rho} = 1}} \tr\left(\rho \unop_{\xstr}\right) = \tr(\rho_{\xstr} \unop_{\xstr}) = \frac{1}{2}\left(c_{s,a} + \|\vec{v}_{s,a}\|_2\right)
\end{align}
with
\begin{align}
\rho_{\xstr} = \frac{1}{d}\left(\id + \sum_j r_{s,a}^{(j)} \Gamma_j\right)
\label{eq:XORoptimalUncertState}
\end{align}
and
\begin{align}
\vec{r}_{s,a} = \vec{v}_{s,a}/\|\vec{v}_{s,a}\|_2\ .
\end{align}
\end{claim}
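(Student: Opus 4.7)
The proof is essentially an eigenvalue computation for a linear combination of anticommuting Clifford generators, so the plan is to exploit the Clifford algebra relations Eqs.~\ref{eq:cliffordAntiComm} and~\ref{eq:cliffordNormalized} twice: once to bound the spectrum of $\vec{v}_{s,a}\cdot\vec{\Gamma}$, and once to evaluate the trace against the ansatz $\rho_{\xstr}$.

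First I would establish the key identity that for any real vector $\vec{w}\in\Real^N$,
\begin{align}
(\vec{w}\cdot\vec{\Gamma})^2 \;=\; \sum_{j} (w^{(j)})^2 (\Gamma_j)^2 \;+\; \frac{1}{2}\sum_{j\neq k} w^{(j)} w^{(k)} \{\Gamma_j,\Gamma_k\} \;=\; \|\vec{w}\|_2^2\, \id,
\end{align}
using symmetrization together with Eqs.~\ref{eq:cliffordAntiComm} and~\ref{eq:cliffordNormalized}. This immediately tells me $\vec{w}\cdot\vec{\Gamma}$ has spectrum $\{\pm\|\vec{w}\|_2\}$. Applied to $\vec{w}=\vec{v}_{s,a}$, combined with Eq.~\ref{eq:uncertRewrite}, this yields the upper bound
\begin{align}
\tr(\rho\,\unop_{\xstr}) \;=\; \tfrac{1}{2}c_{s,a} + \tfrac{1}{2}\tr(\rho\,\vec{v}_{s,a}\cdot\vec{\Gamma}) \;\leq\; \tfrac{1}{2}\bigl(c_{s,a} + \|\vec{v}_{s,a}\|_2\bigr),
\end{align}
for every normalized $\rho\geq 0$, giving the asserted maximum value.

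Next I would verify the ansatz $\rho_{\xstr}$ achieves the bound. Using the fact that $\vec{r}_{s,a}\cdot\vec{\Gamma}$ squares to $\id$ (by the identity above, since $\|\vec{r}_{s,a}\|_2=1$), its eigenvalues are $\pm 1$, so $\id+\vec{r}_{s,a}\cdot\vec{\Gamma}\geq 0$ and hence $\rho_{\xstr}\geq 0$. Normalization $\tr\rho_{\xstr}=1$ follows from the tracelessness of each $\Gamma_j$ (a standard consequence of $\Gamma_j = -\Gamma_k \Gamma_j \Gamma_k$ for any $k\neq j$, together with the cyclicity of trace). I would then expand
\begin{align}
\tr(\rho_{\xstr}\,\unop_{\xstr}) \;=\; \frac{1}{2d}\tr\!\left[\bigl(\id+\vec{r}_{s,a}\cdot\vec{\Gamma}\bigr)\bigl(c_{s,a}\id+\vec{v}_{s,a}\cdot\vec{\Gamma}\bigr)\right],
\end{align}
and evaluate the four resulting traces. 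The two cross terms $\tr(\Gamma_j)$ vanish by tracelessness, while the product $(\vec{r}_{s,a}\cdot\vec{\Gamma})(\vec{v}_{s,a}\cdot\vec{\Gamma})$ splits into a diagonal part $(\vec{r}_{s,a}\cdot\vec{v}_{s,a})\id$ plus off-diagonal terms $\Gamma_j\Gamma_k$ with $j\neq k$, the latter being traceless for the same reason as the $\Gamma_j$ individually. Substituting $\vec{r}_{s,a}\cdot\vec{v}_{s,a}=\|\vec{v}_{s,a}\|_2$ then yields exactly $\tfrac{1}{2}(c_{s,a}+\|\vec{v}_{s,a}\|_2)$, matching the upper bound and proving optimality.

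No step presents a genuine obstacle — the main conceptual input is the anticommutation identity that reduces any quadratic form in $\vec{\Gamma}$ to a scalar. The only mild care needed is the tracelessness argument for products $\Gamma_{j_1}\cdots\Gamma_{j_k}$ of distinct generators, which is standard and follows inductively from anticommutation. Everything else is bookkeeping, and the structure of the answer $\rho_{\xstr}$ is forced: it is the unique pure-like Clifford-algebra state aligned with $\vec{v}_{s,a}$ in the generalized Bloch representation.
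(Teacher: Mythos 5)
Your proof is correct, and it reaches the same ansatz state by a route that differs from the paper's in the two steps that actually carry the weight. The paper expands an arbitrary $\rho$ in the orthonormal operator basis $\{\id\}\cup\{\Gamma_j\}_j\cup\{i\Gamma_j\Gamma_k\}_{j<k}\cup\ldots$, reduces $\tr(\rho\,\unop_{\xstr})$ to $\tfrac{1}{2}(c_{s,a}+\vec{r}_{s,a}\cdot\vec{v}_{s,a})$, and then imports two facts from the generalized Bloch-sphere literature: that $\|\vec{r}_{s,a}\|_2\leq 1$ for every density matrix (giving the upper bound by Cauchy--Schwarz) and that every unit vector $\vec{r}_{s,a}$ conversely yields a valid state (giving attainability). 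You instead get the upper bound spectrally, from $(\vec{v}_{s,a}\cdot\vec{\Gamma})^2=\|\vec{v}_{s,a}\|_2^2\,\id$ and hence $\lambda_{\max}(\unop_{\xstr})\leq\tfrac{1}{2}(c_{s,a}+\|\vec{v}_{s,a}\|_2)$, and you prove attainability by checking positivity and the trace of $\rho_{\xstr}$ directly from the same algebraic identity. This makes your argument self-contained where the paper's cites external results, at the cost of losing the slightly richer information the paper's decomposition provides (the value of $\tr(\rho\,\unop_{\xstr})$ as an explicit linear function of the Bloch vector, which the paper reuses later when matching Alice's observables to the vectors $\vec{r}_{s,a}$). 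Two minor points of care, neither a gap: the identity $(\vec{w}\cdot\vec{\Gamma})^2=\|\vec{w}\|_2^2\,\id$ only tells you the eigenvalues lie in $\{\pm\|\vec{w}\|_2\}$, which is all you need for the bound (that both signs occur follows from tracelessness, but you do not need it); and your tracelessness argument for $\Gamma_j$ requires the existence of a second generator $\Gamma_k$ with $k\neq j$, which holds in any nontrivial XOR game.
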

\begin{proof}
We first note that the set of operators $\{\id\}\cup \{\Gamma_j\}_j \cup \{i \Gamma_j \Gamma_k\}_{jk} \cup \ldots$ forms an orthonormal~\footnote{With respect to the Hilbert-Schmidt inner product.}
basis for $d \times d$ Hermitian operators~\cite{dietz:bloch}. We may hence write any state $\rho$ as
\begin{align}
\rho = \frac{1}{d}\left(\id + \sum_j r^{(j)}_{s,a} \Gamma_j + \ldots \right)\ ,
\end{align}
for some real coefficients $r^{(j)}_{s,a}$. Using the fact that $\mathcal{S}$ forms an orthonormal basis, we then obtain from Eq.~\ref{eq:uncertRewrite} that
\begin{align}
\tr(\rho \unop_{\xstr}) = \frac{1}{2}\left(c_{s,a} + \vec{r}_{s,a} \cdot \vec{v}_{s,a}\right)\ .
\end{align}
Since for any quantum state $\rho$ we must have $\|\vec{r}_{s,a}\|_2 \leq 1$~\cite{ww:cliffordUR},
maximising the left hand side corresponds to maximising the right hand side over vectors $\vec{r}_{s,a}$ of unit length.
The claim now follows by noting that if $\vec{r}_{s,a}$ has unit length, then $\rho_{\xstr}$ is also a valid
quantum state~\cite{ww:cliffordUR}.
\end{proof}
Note that the states $\rho_{\xstr}$ are highly mixed outside of dimension $d=2$, but still correspond to the maximally certain states.
We now claim that that for any setting $s$, Alice can in fact steer to the states which are optimal for Bob's uncertainty operator.

\begin{lemma}
For any XOR game we have for all $s,\hat{s} \in \setS$
\begin{align}
\sum_a p(a|s)\ \rho_{\xstr} = \sum_a p(a|\hat{s})\ \rho_{\hat{s},a}\ ,
\end{align}
where $\rho_{\xstr}$ is defined as in Eq.~\ref{eq:XORoptimalUncertState}.
\end{lemma}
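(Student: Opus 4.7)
The plan is to exploit the special structure of XOR games, where for each pair of settings $(s,t)$ Alice's answer $a$ determines Bob's winning answer via a single bit $c(s,t) \in \{0,1\}$, namely $b = a \oplus c(s,t)$. Plugging this into the definition of $\vec{v}_{s,a}$ from the paragraph before Claim~\ref{claim:uncertaintyVectors}, the inner sum over $b$ collapses to a single term and I get
\begin{align*}
\vec{v}_{s,a} \;=\; \sum_t p(t)\,(-1)^{a\oplus c(s,t)}\,\vec{b}_t \;=\; (-1)^a \vec{w}_s,
\qquad \vec{w}_s := \sum_t p(t)(-1)^{c(s,t)}\vec{b}_t,
\end{align*}
so that $\|\vec{v}_{s,a}\|_2 = \|\vec{w}_s\|_2$ depends only on $s$, and $\vec{r}_{s,a} = (-1)^a \vec{r}_s$ with $\vec{r}_s := \vec{w}_s/\|\vec{w}_s\|_2$. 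The maximally certain state from Eq.~\ref{eq:XORoptimalUncertState} therefore takes the simple form $\rho_{\xstr} = \tfrac{1}{d}\bigl(\id + (-1)^a\,\vec{r}_s\cdot\vec{\Gamma}\bigr)$.

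Next I need the probabilities $p(a|s)$. Since Tsirelson's theorem (recalled above) tells us the optimal Alice-Bob strategy for any XOR game uses the maximally entangled state, Alice's reduced state is $\id/d$, and her observable $A_s$ in Eq.~\ref{eq:AliceXORoperators} is traceless, giving $p(a|s) = \tr(A_s^a\,\id/d) = 1/2$ for all $s,a$. Then
\begin{align*}
\sum_a p(a|s)\,\rho_{\xstr} \;=\; \tfrac{1}{2d}\sum_{a\in\{0,1\}}\bigl(\id + (-1)^a \vec{r}_s\cdot\vec{\Gamma}\bigr) \;=\; \tfrac{1}{d}\id,
\end{align*}
which is manifestly independent of $s$. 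In particular it equals the analogous average for any other setting $\hat s$, which proves the stated equality.

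Finally, this no-signalling condition is not merely a necessary consistency check: by Schr\"odinger's steering theorem (recalled in the ``Steering'' section, Eq.~\ref{eq:nosig} and the discussion following), any family of ensembles $\{\mathcal{E}_s = \{p(a|s),\rho_{\xstr}\}_a\}_s$ whose averages all coincide with a single reduced state $\sigma_B = \id/d$ can be realized by Alice performing suitable measurements on a bipartite quantum state whose marginal on Bob is $\sigma_B$; the maximally entangled state fulfills exactly this. Thus Alice can in fact steer to the ensemble of Bob's maximally certain states, which is the content needed to turn the upper bound of Eq.~\ref{eq:gameRewriteURBound} into the tight identity of Eq.~\ref{eq:XORexact} for XOR games.

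The only mildly delicate point is establishing $p(a|s) = 1/2$; everything else is algebraic bookkeeping using the Clifford expansion of $B_t$. I would expect the main obstacle, if one wants a fully self-contained argument, to be justifying why one can restrict attention to the maximally entangled state and traceless observables — but this is exactly the content of Tsirelson's structure theorem cited in the previous subsection, so within this paper it may simply be invoked.
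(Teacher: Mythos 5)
Your proof is correct and follows essentially the same route as the paper's: both establish $p(a|s)=1/2$ from the tracelessness of $A_s$ on the maximally entangled state, and both use the XOR structure to get $\vec{v}_{s,a}=-\vec{v}_{s,1-a}$ (your $(-1)^a\vec{w}_s$ form is just a more explicit version of this), so that the two maximally certain states average to $\id/d$ independently of $s$. The closing paragraph on Schr\"odinger's steering theorem goes beyond the lemma itself but matches how the paper uses it.
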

\begin{proof}
First of all, note that it is easy to see in any XOR game
the marginals of Alice (and analogously, Bob) are uniform. That is, the probability that
Alice obtains outcome $a$ given setting $s$ obeys
\begin{align}
\prob{a|s} &= \bra{\psi} A_s^a \otimes \id\ket{\psi}\\
&= \frac{1}{2} \left(\bra{\psi}\id \otimes \id\ket{\psi} + (-1)^a \bra{\psi}A_s \otimes \id\ket{\psi}\right)\nonumber\\
&=\frac{1}{2}\left(1 + \tr(A_s^T)\right) = \frac{1}{2}\nonumber \ .
\end{align}
Second, it follows from the fact that we are considering an XOR game that $\vec{v}_{s,a} = - \vec{v}_{s,1 - a}$ for all $s$ and $a$.
Hence, by Claim~\ref{claim:uncertaintyVectors} we have $\vec{r}_{s,a} = - \vec{r}_{s,1 - a}$ and hence
\begin{align}
\frac{1}{2}\left(\rho_{{\vec{x}_{s,0}}} + \rho_{{\vec{x}_{s,1}}}\right) = \frac{\id}{d}\ ,
\end{align}
from which the statement follows.
\end{proof}

Unrelated to our problem, it is interesting to note that the observables $A_s = \vec{a}_s \cdot \vec{\Gamma}$ employed by Alice are determined exactly by the vectors
$r_{s,a}$ above. Let $\sigma_{a,s}$ denote Bob's post-measurement state which are steered to when Alice performs 
the measurement labelled by $s$ and obtains outcome $a$.
We then have that for any of Bob's measurement operators
\begin{align}
\tr\left(\sigma_{s,a} \unop_{\xstr}\right) &= \frac{\bra{\psi}A_s^a \otimes \unop_{\xstr}\ket{\psi}}{p(a|s)}\\
&= \frac{1}{2}\left(c_{s,a} + (-1)^a \vec{a}_{s} \cdot \vec{v}_{s,a}\right)\ .
\end{align}
Since Alice's observables should obey $(A_s)^2 = \id$, we would like that $\|\vec{a}_s\|_2= 1$. When trying to find the optimal measurements
for Alice we are hence faced with exactly the same task as when trying to optimize over states $\rho_{\xstr}$ in Claim~\ref{claim:uncertaintyVectors}.
That is, letting $\vec{a}_s = (-1)^a r_{s,a}$ gives us the optimal observables for Alice.

\subsection{Retrieval games and entropic uncertainty relations}
\label{sec:retrieval}

As a further example, we now consider a special class of non-local games that we call 
a {\it retrieval games}, which are a class of XOR games. This class includes the famous CHSH game.
Retrieval games have the further property, that not only are the fine-grained uncertainty relations violated if quantum mechanics would be more non-local, but also the min-entropic uncertainty relation
derived in \cite{ww:cliffordUR} would be violated.  The class of retrieval
games are those which correspond directly to the retrieval of a bit from an
$n$-bit string
More precisely, we label the settings of Alice
using all the $n$-bit strings starting with a '0'. That is,
\begin{align}
\setS = \{ (0,s^{(2)},\ldots,s^{(n)}) \mid s^{(j)} \in \01\}\ ,
\end{align}
where we will choose each setting with uniform probability $p(s) = 1/2^{n-1}$.
For each setting $s$, Alice has two outcomes which we label using the strings $s$ and $\underbar{s}$, where $\underbar{s}$ is the bitwise complement of the
string $s$. More specifically, we let 
\begin{align}
\str_{s,0} &= s\ ,\\
\str_{s,1} &= \underbar{{\it{s}}}\ . 
\end{align}
Bob's settings are labelled by the indices $\setT = \{1,\ldots,n\}$, and chosen uniformly at random $p(t) = 1/n$. 
Alice and Bob win the game if and only if Bob outputs the bit at position $t$ of the string
that Alice outputs. In terms of the predicate, the game rules can be written as
\begin{align}
V(a,b|s,t) = \left\{\begin{array}{cc}
1 & \mbox{ if } b = \str_{s,a}^{(t)}\ ,\\
0 & \mbox{ otherwise} \ .
\end{array}
\right.
\end{align}
We call such a game a \emph{retrieval game} of length $n$. It is not hard to see that the CHSH game above is a retrieval game of length
$n=2$~\cite{as:dimBound}.

We now show that for the case of retrieval games, not only can Alice steer perfectly to the optimal uncertainty states, but
Bob's optimal measurements are in fact very incompatible in the sense that his measurement operators necessarily anti-commute.
In particular, this means that we obtain very strong uncertainty relations
 for Bob's optimal measurements~\cite{ww:cliffordUR},
as we will discuss in more detail below.

\begin{lemma}
Let $B_t = B_t^0 - B_t^1$ denote Bob's optimal dichotomic observables in dimension $d = 2^{n}$. 
Then for any retrieval game
\begin{align}
\{B_t,B_{t'}\} = 0 \mbox{ for all } t \neq t' \in \setT\ .
\end{align}
\end{lemma}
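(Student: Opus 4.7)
My plan is to use Tsirelson's structure theorem to write Bob's optimal observables as $B_t = \vec{b}_t \cdot \vec{\Gamma}$ with unit vectors $\vec{b}_t \in \Real^N$; by the Clifford relations
\begin{align*}
\{B_t, B_{t'}\} = 2(\vec{b}_t \cdot \vec{b}_{t'})\, \id,
\end{align*}
so it suffices to show that any optimal strategy forces $\vec{b}_t \cdot \vec{b}_{t'} = 0$ whenever $t \neq t'$. The argument proceeds by upper-bounding the winning probability via Cauchy--Schwarz and then reading off orthogonality from the equality condition.

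First I would translate the retrieval game into XOR form. Since $\str_{s,0}^{(t)} = s^{(t)}$ and $\str_{s,1}^{(t)} = s^{(t)} \oplus 1$, the winning predicate $b = \str_{s,a}^{(t)}$ is exactly $a \oplus b = s^{(t)}$. Using the uniform marginals established above together with $\bra{\psi} A_s \otimes B_t \ket{\psi} = \vec{a}_s \cdot \vec{b}_t$ on Tsirelson's maximally entangled state, a short expansion gives
\begin{align*}
\pgame = \frac{1}{2} + \frac{1}{n \cdot 2^{n}} \sum_{s} \vec{a}_s \cdot \vec{w}_s, \qquad \vec{w}_s := \sum_{t=1}^n (-1)^{s^{(t)}} \vec{b}_t.
\end{align*}
Maximising over unit $\vec{a}_s$ yields $\pgameMAX = \frac{1}{2} + \frac{1}{n \cdot 2^n} \sum_s \|\vec{w}_s\|_2$.

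Next I would bound $\sum_s \|\vec{w}_s\|_2$ using Cauchy--Schwarz after computing $\sum_s \|\vec{w}_s\|_2^2$. Expanding and summing the phase $(-1)^{s^{(t)}+s^{(t')}}$ over the $2^{n-1}$ strings $s$ with $s^{(1)}=0$: for $t = t'$ the phase is identically $1$, while for $t \neq t'$ it is a non-trivial character of $(\mathbb{Z}/2)^{n-1}$ in the free bits $s^{(2)},\dots,s^{(n)}$, so the off-diagonal contributions cancel. This leaves $\sum_s \|\vec{w}_s\|_2^2 = n \cdot 2^{n-1}$ \emph{for every} choice of unit vectors $\vec{b}_t$, and Cauchy--Schwarz gives $\sum_s \|\vec{w}_s\|_2 \leq 2^{n-1}\sqrt{n}$, with equality if and only if $\|\vec{w}_s\|_2^2 = n$ for every $s$.

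Finally, the constant-length condition $\|\vec{w}_s\|_2^2 = n$ rewrites as
\begin{align*}
\sum_{t < t'} (-1)^{s^{(t)}+s^{(t')}} \vec{b}_t \cdot \vec{b}_{t'} = 0 \qquad \text{for all } s \text{ with } s^{(1)} = 0,
\end{align*}
which is a system of $2^{n-1}$ linear equations in the $\binom{n}{2}$ unknowns $\vec{b}_t \cdot \vec{b}_{t'}$. The coefficient function attached to a pair $(t,t')$ is a character of $(\mathbb{Z}/2)^{n-1}$: the pairs $(1,t')$ contribute the weight-$1$ characters indexed by $e_{t'-1}$, while the pairs $2 \leq t < t' \leq n$ contribute the weight-$2$ characters indexed by $e_{t-1} + e_{t'-1}$. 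These $\binom{n}{2}$ characters are all distinct, hence linearly independent, so every inner product $\vec{b}_t \cdot \vec{b}_{t'}$ with $t \neq t'$ must vanish, proving the anti-commutation. The main delicate step is precisely this character-orthogonality bookkeeping: identifying the coefficient functions as distinct characters of $(\mathbb{Z}/2)^{n-1}$ and concluding that the equality condition of Cauchy--Schwarz collapses the system to the trivial solution; everything else is a direct expansion of the Bell value.
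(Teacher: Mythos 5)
Your proof is correct, and its core strategy coincides with the paper's: pass to Tsirelson's vector picture, show the game value is at most $\tfrac12+\tfrac{1}{2\sqrt{n}}$ by exploiting the fact that the average of $\|\vec{w}_s\|_2^2$ (equivalently the paper's $\|\vec{v}_{s,a}\|_2^2$) is independent of the $\vec{b}_t$ because the cross terms cancel under the sum over settings, and then examine when the bound is saturated. The paper gets the upper bound via Jensen's inequality on the square root where you use Cauchy--Schwarz on $\sum_s\|\vec{w}_s\|_2$; these are the same estimate with the same equality condition (all norms equal). Where you genuinely go beyond the paper is the last step: the paper only observes that choosing $\vec{b}_t\cdot\vec{b}_{t'}=0$ \emph{achieves} the bound, i.e.\ it proves that anti-commuting observables are optimal, but does not argue that every optimal choice must anti-commute, which is what the lemma actually asserts. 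You close that gap: the equality condition forces $\|\vec{w}_s\|_2^2=n$ for all $s$, which is a linear system in the unknowns $\vec{b}_t\cdot\vec{b}_{t'}$ whose coefficient functions are $\binom{n}{2}$ \emph{distinct} characters of $(\mathbb{Z}/2)^{n-1}$ (weight-one characters from the pairs containing $t=1$, weight-two characters otherwise), hence linearly independent, so the only solution is the trivial one. This character bookkeeping is exactly the necessity argument the paper leaves implicit, and it is correct; the only shared caveat is that both arguments work within Tsirelson's ansatz $B_t=\vec{b}_t\cdot\vec{\Gamma}$ on the maximally entangled state, so "optimal" is understood modulo that structure theorem.
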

\begin{proof}
First of all note that
using Claim~\ref{claim:uncertaintyVectors} we can write the winning probability for 
arbitrary observables $B_t = \vec{b}_t \cdot \vec{\Gamma}$
in terms of the maximum uncertainty states $\rho_{\xstr}$ as
\begin{align}
\pgame &= \frac{1}{2^n} \sum_{s,a} \tr\left(\rho_{\xstr} \unop_{\xstr}\right)\\
&= \frac{1}{2^n} \sum_{s,a} \frac{1}{2}\left(1  + \|\vec{v}_{s,a}\|_2\right)\\
& = \frac{1}{2}\left(1  + \frac{1}{2^n} \sum_{s,a} \sqrt{\vec{v}_{s,a} \cdot \vec{v}_{s,a}}\right)\\
&\leq \frac{1}{2}\left(1 + \sqrt{\frac{1}{2^n} \sum_{s,a} \sum_{t t'} (-1)^{\str_{s,a}^{(t)}} (-1)^{\str_{s,a}^{(t')}} \vec{b}_t \cdot \vec{b}_{t'}}\right)\\
& = \frac{1}{2} + \frac{1}{2 \sqrt{n}}\ ,
\end{align}
where the inequality follows from the concavity of the square-root function and Jensen's inequality, and the final inequality from the fact that
$\vec{b}_t$ is of unit length.
Equality with this upper bound can be achieved by choosing $\vec{b}_t \cdot \vec{b}_{t'} = 0$ for all $t \neq t'$, which is equivalent to
$\{B_t,B_{t'}\} = 0$ for all $t \neq t'$.
\end{proof}

For our discussions about uncertainty relations in the next section, it will be useful to note that the above implies that
$\lambda_{\rm max}(\unop_{\xstr}) = \lambda_{\rm max}(\unop_{{\vec{x}_{s',a'}}})$ for all settings $s,s' \in \setS$ and outcomes $a,a' \in \setA$.
A useful consequence of the fact that $\vec{b}_t \cdot \vec{b}_{t'} = 0$ for $t \neq t'$ is also that
the probability that Bob retrieves
a bit of the string $\xstr$ correctly is the same for all bits:

\begin{corollary}\label{cor:equalForBits}
For a retrieval game, we have for the optimal strategy that for all settings $s \in \setS$ and outcomes
$a \in \setA$
\begin{align}
\tr\left(\rho_{\xstr} B_t^{\str_{s,a}^{(t)}}\right) = 
\tr\left(\rho_{\xstr} B_{t'}^{\str_{s,a}^{(t')}}\right)  \mbox{ for all } t,t' \in \setT\ .
\end{align}
\end{corollary}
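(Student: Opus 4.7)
The plan is to substitute the explicit forms of $\rho_{\xstr}$ and $B_t^b$ already established in the previous subsections, and then invoke the orthogonality $\vec{b}_t\cdot\vec{b}_{t'}=\delta_{tt'}$ from the preceding lemma to show that the quantity $\tr(\rho_{\xstr}B_t^{\str_{s,a}^{(t)}})$ collapses to something with no $t$-dependence at all.

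First I would use Eq.~\ref{eq:BobMeasureOps} with $B_t=\vec{b}_t\cdot\vec{\Gamma}$ together with the explicit form of the maximally certain state from Eq.~\ref{eq:XORoptimalUncertState}, namely $\rho_{\xstr}=\frac{1}{d}(\id+\vec{r}_{s,a}\cdot\vec{\Gamma})$. Using the Hilbert--Schmidt orthonormality of $\{\id\}\cup\{\Gamma_j\}_j\cup\{i\Gamma_j\Gamma_k\}_{jk}\cup\ldots$ exactly as in the proof of Claim~\ref{claim:uncertaintyVectors}, only the part of $\rho_{\xstr}$ linear in $\vec{\Gamma}$ contributes to the cross term, giving
\[
\tr\!\left(\rho_{\xstr}\,B_t^{\str_{s,a}^{(t)}}\right)=\tfrac{1}{2}\bigl(1+(-1)^{\str_{s,a}^{(t)}}\,\vec{r}_{s,a}\cdot\vec{b}_t\bigr).
\]

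Next I would evaluate $\vec{r}_{s,a}\cdot\vec{b}_t$ using the specific structure of a retrieval game. Because for each $(s,a,t)$ there is a unique winning answer $b=\str_{s,a}^{(t)}$, and because $\prob{t}=1/n$, the general formula for $\vec{v}_{s,a}$ simplifies to
\[
\vec{v}_{s,a}=\frac{1}{n}\sum_{t\in\setT}(-1)^{\str_{s,a}^{(t)}}\,\vec{b}_t.
\]
The preceding lemma then gives $\vec{v}_{s,a}\cdot\vec{b}_t=(1/n)(-1)^{\str_{s,a}^{(t)}}$, and after dividing by $\|\vec{v}_{s,a}\|_2$ and multiplying by the outer sign we obtain
\[
(-1)^{\str_{s,a}^{(t)}}\,\vec{r}_{s,a}\cdot\vec{b}_t=\frac{1}{n\,\|\vec{v}_{s,a}\|_2},
\]
which is manifestly independent of $t$. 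Substituting this back into the formula for the trace establishes the corollary (and, as a sanity check, reproduces the value $\frac{1}{2}+\frac{1}{2\sqrt{n}}$ once $\|\vec{v}_{s,a}\|_2=1/\sqrt{n}$ is computed from the orthogonality of the $\vec{b}_t$).

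The only point requiring any care is verifying that the inner sum over $b$ in Eq.~\ref{eq:unop} genuinely reduces to a single term for retrieval games; this is immediate from the definition of the predicate $V$ in the previous subsection, where exactly one $b$ satisfies $V(a,b|s,t)=1$ for each $(s,a,t)$. Given that, the remaining calculation is routine linear algebra made trivial by $\vec{b}_t\cdot\vec{b}_{t'}=\delta_{tt'}$, so I do not foresee a genuine obstacle.
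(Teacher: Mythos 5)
Your proposal is correct and follows exactly the route the paper intends: the paper states the corollary as a direct consequence of $\vec{b}_t\cdot\vec{b}_{t'}=0$ without writing out the computation, and your calculation of $\tr(\rho_{\xstr}B_t^{\str_{s,a}^{(t)}})=\tfrac{1}{2}\bigl(1+(-1)^{\str_{s,a}^{(t)}}\vec{r}_{s,a}\cdot\vec{b}_t\bigr)=\tfrac{1}{2}+\tfrac{1}{2\sqrt{n}}$ is precisely the omitted detail. All steps check out, including the reduction of the sum over $b$ to a single term via the retrieval-game predicate and the consistency check against $\lambda_{\rm max}(\unop_{\xstr})$.
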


The fact that min-entropic uncertainty relations are violated if quantum mechanics were to do better than the Tsirelson's bound for retrieval games will be shown in 
Section \ref{sec:minentropic}.

It is an interesting open question, whether Alice can steer to the maximally certain states for Bob's optimal measurements
for all games. This is for example possible, for the $\mod 3$-game suggested in~\cite{massar:game} to which the optimal measurements
were derived in~\cite{ji:game,yeongCherng:game}. Unfortunately, the structure of measurement operators that are optimal for Bob
is ill-understood for most games, which makes this a more difficult task.

\section{Min-entropic uncertainty relations}
\label{sec:minentropic}

We now discuss the relationship between fine-grained uncertainty relations
and min-entropic ones. The relation between retrieval games and min-entropic uncertainty relations will follow from that.
Recall that the min-entropy of the distribution obtained by measuring a state $\rho$ using the measurement
given by the observables $B_t$ can be written as
\begin{align}
\hmin(B_t)_\rho = - \log \max_{x^{(t)} \in \01} \tr(B_t^{x^{(t)}} \rho)\ .
\end{align}
A min-entropic uncertainty relation for the measurements in an $n$-bit retrieval game can be bounded as
\begin{align}
\frac{1}{n} \sum_{t = 1}^{n} \hmin(B_t)_\rho &= - \frac{1}{n} \sum_{t} \log \max_{x^{(t)} \in \01} 
\tr\left(B_t^{x^{(t)}} \rho\right)\\
&\geq - \log \max_{\vec{\tilde{x}} \in \setB^{\times n}} \sum_t \tr\left(B_t^{\tilde{x}^{(t)}} \rho\right)\label{eq:minEntropyJensen}\\
&= - \log \max_{s,a} \lambda_{\rm max}(\unop_{\xstr})\ ,
\end{align}
where the inequality follows from Jensen's inequality and the concavity of the log. I.e. the fine-grained relations provide a lower bound on the min-entropic
uncertainty relations.

Now we note that for the case of retrieval games, not only do we have
\begin{align}
\lambda_{\rm max}(\unop_{\xstr}) = \frac{1}{2} + \frac{1}{2\sqrt{n}}\ ,
\end{align}
but by Corollary~\ref{cor:equalForBits} we have that the inequality~\ref{eq:minEntropyJensen} is in fact tight, where
equality is achieved for any of the maximum uncertainty states $\rho_{\xstr}$ from the retrieval game.  
We thus have

\begin{theorem}\label{eq:retrievalMinEntropy}
For any retrieval game, a violation of the Tsirelson's bound implies
 a violation of the min-entropic uncertainty relations for Bob's optimal measurements.  A violation of the min-entropic uncertainty relation implies a violation
of the Tsirelson's bound as long as steering is possible. 
\end{theorem}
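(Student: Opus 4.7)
The plan is to establish the two directions of the theorem separately, leveraging the tight chain of inequalities derived in this section.

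For the reverse direction, suppose that a state $\rho$ in some theory satisfies $\frac{1}{n}\sum_{t=1}^n \hmin(B_t)_\rho < -\log(\tfrac{1}{2} + \tfrac{1}{2\sqrt{n}})$ for Bob's optimal observables $B_t$. Define the string $\vec{x}_\ast \in \{0,1\}^n$ by taking $x_\ast^{(t)} = \arg\max_{x^{(t)}} p(x^{(t)}|t)_\rho$ for each $t$. Exponentiating the assumed inequality and applying AM--GM yields $\frac{1}{n}\sum_t p(x_\ast^{(t)}|t)_\rho \geq \bigl(\prod_t p(x_\ast^{(t)}|t)_\rho\bigr)^{1/n} > \tfrac{1}{2} + \tfrac{1}{2\sqrt{n}}$, so the fine-grained UR is violated at string $\vec{x}_\ast$. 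Since for a length-$n$ retrieval game the strings $\vec{x}_{s,a}$ enumerate all of $\{0,1\}^n$, there is a pair $(s^\ast,a^\ast)$ with $\vec{x}_{s^\ast,a^\ast} = \vec{x}_\ast$. Provided the theory permits steering, Alice constructs an ensemble $\{p(a|s^\ast),\sigma_{s^\ast,a}\}_a$ with $\sigma_{s^\ast,a^\ast} = \rho$, completing the remaining settings and outcomes consistently with the no-signalling marginal; substituting into Eq.~\ref{eq:gameRewrite} then yields $\pgame > \tfrac{1}{2} + \tfrac{1}{2\sqrt{n}}$, violating Tsirelson.

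For the forward direction, suppose some strategy achieves $\pgame > \tfrac{1}{2} + \tfrac{1}{2\sqrt{n}}$ with Bob's optimal observables. By Eq.~\ref{eq:gameRewrite} and averaging, some pair $(s^\ast,a^\ast)$ admits a steered state $\sigma := \sigma_{s^\ast,a^\ast}$ obeying $\frac{1}{n}\sum_t p(x_{s^\ast,a^\ast}^{(t)}|t)_\sigma > \tfrac{1}{2} + \tfrac{1}{2\sqrt{n}}$, violating the fine-grained UR for $\vec{x}_{s^\ast,a^\ast}$. Recall that for a retrieval game the Jensen step (\ref{eq:minEntropyJensen}) linking the fine-grained to the min-entropic relation is tight, with equality attained by $\rho_{\vec{x}_{s^\ast,a^\ast}}$, precisely because Corollary~\ref{cor:equalForBits} forces the probabilities $p(x_{s^\ast,a^\ast}^{(t)}|t)$ to be equal across $t$. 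Invoking the permutation symmetry among the anti-commuting optimal observables $B_t$, I would argue that without loss of generality the maximising $\sigma$ in the alternative theory can also be taken to equalise the terms $p(x_{s^\ast,a^\ast}^{(t)}|t)_\sigma$: any $\sigma$ that does not can be replaced by a symmetrised mixture that preserves (or increases) $\sum_t p(x_{s^\ast,a^\ast}^{(t)}|t)_\sigma$. Equality then forces the geometric mean to match the arithmetic mean, so each term exceeds $\tfrac{1}{2} + \tfrac{1}{2\sqrt{n}}$ and hence $\hmin(B_t)_\sigma < -\log(\tfrac{1}{2} + \tfrac{1}{2\sqrt{n}})$ for every $t$, violating the min-entropic UR.

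The main obstacle is the symmetrisation step just invoked: passing from an arithmetic-mean violation of the fine-grained UR to a geometric-mean (min-entropy) violation is automatic only when the $n$ retrieval probabilities coincide, since AM exceeding the threshold does not in general force the GM to do so. For quantum mechanics this coincidence is guaranteed by the Clifford-algebra structure of the optimal $B_t$, but in a general theory one must either restrict attention to states compatible with that symmetry or exhibit a symmetrisation procedure that stays within the allowed state space. As a fallback, a weaker version of the forward statement — namely that some $t^\ast$ satisfies $\hmin(B_{t^\ast})_\sigma < -\log(\tfrac{1}{2} + \tfrac{1}{2\sqrt{n}})$ — follows immediately from the fine-grained violation by pigeonhole, and already witnesses a genuine failure of the uncertainty principle for Bob's optimal measurements.
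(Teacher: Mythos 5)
Your reverse direction is, modulo presentation, the paper's own argument read backwards: the Jensen/AM--GM chain around Eq.~(\ref{eq:minEntropyJensen}) shows that an average min-entropy below $-\log(\tfrac12+\tfrac1{2\sqrt n})$ forces $\tfrac1n\sum_t\max_x p(x|t)_\rho>\tfrac12+\tfrac1{2\sqrt n}$, the argmax string is some $\vec{x}_{s,a}$ because a length-$n$ retrieval game exhausts $\01^n$, and the steering hypothesis converts this into a game value above Tsirelson's bound. The one point you (and the paper) leave implicit is that beating Tsirelson requires the \emph{whole} ensemble to do well, not just the single steered state $\rho$: one must also steer the complementary outcome to a state that performs equally well on the complement string. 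In quantum mechanics this is automatic from $\unop_{\vec{x}_{s,0}}+\unop_{\vec{x}_{s,1}}=\id$ together with $\sigma_B=\id/d$; in a general theory it is part of what ``as long as steering is possible'' must be taken to mean.

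The forward direction is where you have correctly located the real issue, and it is worth saying plainly that the paper's proof is exactly the route you reconstruct --- $\lambda_{\rm max}(\unop_{\xstr})=\tfrac12+\tfrac1{2\sqrt n}$ plus tightness of (\ref{eq:minEntropyJensen}) via Corollary~\ref{cor:equalForBits} --- and that this tightness is a property of the \emph{quantum} maximally certain states $\rho_{\xstr}$, not of an arbitrary super-Tsirelson state in a hypothetical theory. Your worry is not cosmetic. For $n=2$, consider a no-signalling box whose four steered states each satisfy $p(x_{s,a}^{(1)}|B_1)=1$ and $p(x_{s,a}^{(2)}|B_2)=0.72$ (the no-signalling marginal constraint is satisfied since the averages over $a$ are uniform for both settings). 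Each state wins the retrieval task with probability $0.86>\tfrac12+\tfrac1{2\sqrt2}\approx 0.8536$, so Tsirelson's bound is violated, yet $\tfrac12\bigl(\hmin(B_1)+\hmin(B_2)\bigr)=-\tfrac12\log_2 0.72\approx 0.237$, which exceeds the min-entropic bound $-\log_2(0.8536)\approx 0.228$: the min-entropic relation survives. So the implication ``Tsirelson violation $\Rightarrow$ min-entropic violation'' genuinely requires the additional hypothesis that the violating states (approximately) equalise the $n$ retrieval probabilities, as the quantum optima do by Corollary~\ref{cor:equalForBits}. Your proposed symmetrisation would supply this only if the symmetrised state lies in the theory's state space, which is an assumption rather than a consequence; and your pigeonhole fallback yields only $\hmin(B_{t^*})<-\log(\tfrac12+\tfrac1{2\sqrt n})$ for a single $t^*$, which is a fine-grained violation, not a violation of the averaged min-entropic relation. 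In short, you have reproduced the paper's argument and, in flagging the arithmetic-versus-geometric-mean step, identified a gap that the paper's one-line deduction does not close.
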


Note that since the CHSH game is a retrieval game with $n=2$, we have that as long as steering is possible, 
Tsirelson's bound~\cite{tsirel:original} is violated if and only if Deutsch' min-entropic uncertainty relation~\cite{deutsch:uncertainty} 
is violated for Bob's (or Alice's) measurements.

\section{An example: the CHSH inequality in general theories}\label{sec:examples}

Probably the most well studied Bell inequality is the CHSH inequality
and previous attempts to understand the strength of quantum non-locality
have been with respect to it~\cite{ infoCausality,wim:nonlocal,brassard2006limit,allcock2009recovering}.
Although the connections between non-locality and uncertainty are more general, 
we can use the CHSH inequality as an example and show how the uncertainty
relations of various theories determine the extent to which
the theory can violate it.  We will see in this example how non-locality requires steering (which is
what prevents classical mechanics from violating a Bell inequality despite having maximal 
certainty).  Furthermore, we will see that local-hidden variable theories can have increased steering ability, but don't violate
a Bell inequality because they exactly compensate by having more uncertainty.  Quantum mechanics has perfect
steering, and so it's non-locality is limited only by the uncertainty principle.  We will also discuss
theories which have the same degree of steering as quantum theory, but greater non-locality because 
they have greater certainty (so-called ``PR-boxes''~\cite{PR,PR1,PR2} being an example).

The CHSH inequality can be expressed as a game
in which Alice and Bob receive
binary questions $s, t \in \01$ respectively, and similarly their answers
$a, b\in \01$ are single bits.  Alice and Bob win the CHSH game if their answers satisfy
$a \oplus b = s \cdot t$. The CHSH game thus belongs to the class of XOR games, and any other XOR game could be used as a similar example.

Note that we may again rephrase the game in the language of random access coding~\cite{as:dimBound}, where
we label Alice's outcomes using string $\xstr$ and Bob's goal is to retrieve the $t$-th element of this string. 
For $s=0$, Bob will always need to give the same answer as Alice in order to win independent of $t$, and hence we have
$\vec{x}_{0,0} = (0,0)$, and $\vec{x}_{0,1} = (1,1)$. For $s=1$, Bob needs to give the same answer for $t=0$, but the opposite
answer if $t=1$. That is, $\vec{x}_{1,0} = (0,1)$, and $\vec{x}_{1,1} = (1,0)$.

To gain some intuition of the tradeoff between steerability and uncertainty, we consider an (over)simplified example
in Figure~\ref{fig:tradeoff}. We examine quantum, classical, and a theory allowing maximal non-locality below.

\begin{figure}
\begin{center}
\includegraphics[scale=.85]{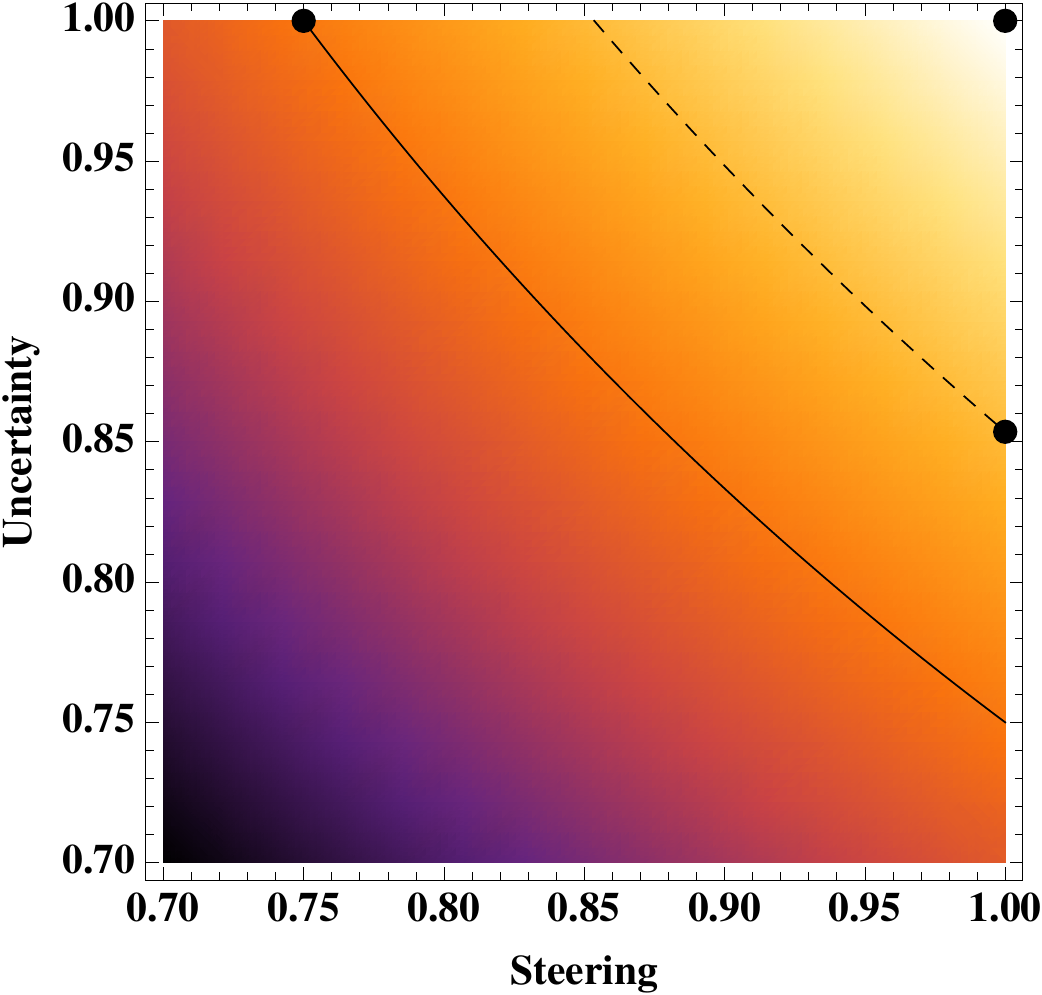}
\caption{A simplified example: Imagine a world in which the only steerable states are the maximally certain states of the uncertainty relations we consider,
and we have an all or nothing form of steering. I.e., either Alice can steer to all ensembles with probability $p_{\rm steer}$, or fails entirely.
The vertical axis denotes the certainty $p_{\rm cert}$ (that is, the lack of \emph{uncertainty}), and the horizontal axis $p_{\rm steer}$.
Lighter colours indicate a larger winning probability, which in this simplified case is just $\pgame = p_{\rm steer} p_{\rm cert}$.
The solid line denotes the case of $\pgameMAX = 3/4$, which can be achieved classically.
The point on the line denotes the combination of values for a classical deterministic theory: there is no uncertainty
($\zeta_{\xstr} = 1$ for all $\xstr$), and no steering other than the trivial one to the state Alice and Bob already share as part of their strategy which yields $3/4$ on average. The dashed line denotes the value $\pgameMAX = 1/2 + 1/(2\sqrt{2})$ achievable by a quantum strategy. The point on the line denotes the point reached quantumly: there is uncertainty, but we can steer perfectly to the maximally certain states. Finally, the point at $(1,1)$ denotes the point achievable by ``PR-boxes'': there is no uncertainty, but nevertheless perfect steering.
}
\label{fig:tradeoff}
\end{center}
\end{figure}

\noindent
{\bf (i) Quantum mechanics}:

As we showed in Section \ref{sec:xor} 
we have for any~\xor\ game that we can always steer to the maximally certain states 
$\rho_{\xstr}$, and hence Alice's and Bob's winning probability depend only on the uncertainty relations.
For CHSH, Bob's optimal measurement are given by the binary observables $B_0 = Z$ and $B_1 = X$.
The amount of uncertainty we observe for Bob's optimal measurements is given by
\begin{align}
\zetamaxx
= \frac{1}{2} + \frac{1}{2\sqrt{2}} \mbox{ for all } \xstr \in \01^2\ ,
\end{align}
where the maximally certain states are given by the eigenstates of $(X+Z)/\sqrt{2}$ and
$(X - Z)/\sqrt{2}$. Alice can steer Bob's states to the eigenstates of these
operators by measuring in the basis of these operators on the state
\begin{align}
\ket{\psi^-}=\frac{1}{\sqrt{2}}(\ket{00}_{AB}+\ket{11}_{AB})\ .
\end{align}
We hence have $\pgameMAX = \zetamaxx = 1/2 + 1/(2\sqrt{2})$ 
which is Tsirelson's 
bound~\cite{tsirel:original,tsirel:separated}. If Alice and Bob
could obtain a higher value for the same measurements, at least one of the fine-grained uncertainty
relations is violated. We also saw above that a larger violation
for CHSH also implies a violation
of Deutsch' min-entropic uncertainty relation.

\noindent
{\bf (ii) Classical mechanics \& local hidden variable theories}:

For classical theories, let us first consider the case where we use a deterministic strategy,
since classically there is no fundamental restriction on how much 
information can be gained i.e., if we optimize the uncertainty
relations over all classical states, we have 
$\zetamaxx= 1$
for any set of measurements. 
However, for the states which are maximally certain, there is no steering property
either because in the no-signalling constraint
the density matrix has only one term in it.
A deterministic state cannot be written as a convex sum of any other states -- it is an 
extremal point of a convex set.  The best deterministic strategy
is thus to prepare a particular state at Bob's site (e.g. an encoding of the bit-string $00$).
This results in $\pgame=3/4$. A probabilistic strategy cannot do better, since it would 
be a convex combination of deterministic strategies and one might as well choose the best one.
However, it will be instructive to consider the non-deterministic case.

Although the maximally certain states cannot be steered to,
we can steer to states
which are mixtures of deterministic states (c.f.~\cite{spekkens2004defense}).  
This corresponds to using a non-deterministic
strategy, or if the non-determinism is fundamental, to 
a local hidden variable theory.
However, measurements on the states which can be steered to will not have well-defined
outcomes.  If we optimize the uncertainty relations 
with respect to the non-deterministic states, we will
find that there is a substantial uncertainty in measurement outcomes on these states.
We will find that the ability to steer is exactly compensated by
an inability to obtain certain measurement outcomes, thus the probability of winning
the non-local game will again be $\pgame=3/4$.
  
\begin{figure}
 \centering
  \includegraphics[width=3.5cm]{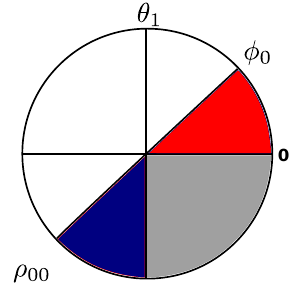}
  \includegraphics[width=3.5cm]{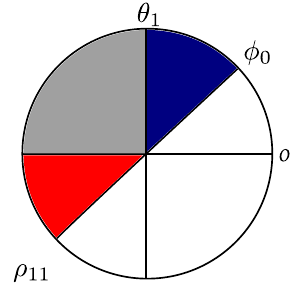}
  \includegraphics[width=3.5cm]{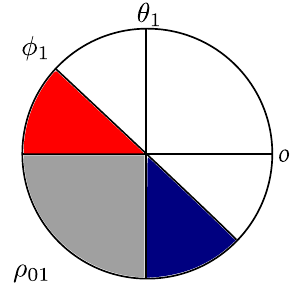}
  \includegraphics[width=3.5cm]{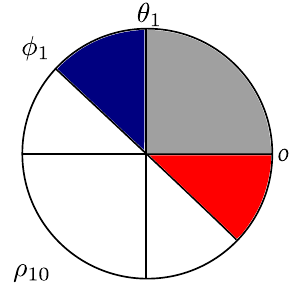}
  \caption{
The hidden variable states $\rho_{\str}$ which encode each of the four random access strings on Bob's site are depicted above.  The grey region corresponds
to the hidden variable being such that Bob can correctly retrieve both bits.
The blue and red regions correspond to the region where Bob incorrectly
retrieves the first or second bit respectively.  These regions must be included
if Alice is to be able to steer to the states.
}
 \label{fig:lhv}
\end{figure}

Figure \ref{fig:lhv}
depicts an optimal local hidden variable theory for CHSH 
where the local hidden variable is
a point on the unit circle labelled by an angle $\Omega$
and the states are probability distributions
\begin{align}
\rho_{\xstr}=\int_{\Omega_{\xstr}}^{\Omega_{\xstr}'} p(\Omega)\sigma_\Omega d\Omega
\label{eq:hiddenstate}
\end{align}
with $\sigma_\Omega$ denoting the state of the hidden
variable, and $p(\Omega)$ uniform.    Alice can prepare these states
at Bob's site if they initially share the maximally correlated state 
$\psi^{MC}_{AB}=\int_{0}^{2\pi} p(\Omega)\sigma^\Omega_A \sigma^\Omega_B d\Omega$
and she makes a partial measurement on her state which determines that the value of her hidden variable lies within
$\Omega_{\xstr}$ and $\Omega_{\xstr}'$.

A measurement by Bob corresponds 
to cutting the circle in half along some angle, and then determining whether the hidden variable lies
above or below the cut.  I.e. a coarse grained determination of the bounds of integration
in  $\rho_{\xstr}$ to within $\pi$.
Bob's measurement for $t=0$ thereby corresponds to determining whether the hidden variable lies above or below the cut along the equator,
while his measurement for $t=1$ corresponds to determining whether the hidden variable lies above or below the cut labelled by the angle $\theta_1$.  
If the hidden variable is above the cut,
we label the outcome as $1$, and $0$ otherwise.

First of all, note that if the hidden variable
lies in the region shaded in grey, as depicted in Figure \ref{fig:lhv},
then Bob will be able to retrieve both bits correctly
because it lies in the region such that the results for Bob's measurements would
match the string that the state encodes. 
For example, for the 
$\rho_{00}$ state, the hidden variable in the grey region lies below both cuts, 
while for the $\rho_{11}$ state, the hidden
variable lies above both cuts.  If we optimize the uncertainty relations over
states which are chosen from the grey region, then $\zeta_{\xstr}=1$.  If $\rho_{\xstr}$
only includes the grey
region, then it is a maximally certain state.

Whether steering is possible to a particular set of 
$\rho_{\xstr}$ is determined by the no-signalling condition.
If the bounds of integration $\Omega_{\xstr}$ and $\Omega_{\xstr}'$ in 
Equation (\ref{eq:hiddenstate})
only
included the grey regions, then we would not be able to steer to the states $\rho_{\xstr}$ 
since $p(00)\rho_{00}+p(11)\rho_{11}\neq p(01)\rho_{01}+p(10)\rho_{10}$.
That is, steering to the maximally certain states is forbidden by the no-signalling principle if they lie in the grey region.
The states only become steerable
if the $\rho_{\xstr}$ include the red and blue
areas depicted in Figure \ref{fig:lhv}. Then Alice making measurements on her share of $\psi^{MC}_{AB}$
which only determine the hidden variable
to within an angle $\pi$ will be able to prepare the
appropriate states on Bob's site.
Alice's measurement angles are 
labelled by the angles $\phi_0$ for the $s=0$ partition and $\phi_1$ for the $s=1$
partition.  

However, if the hidden variable lies in the blue area then Bob retrieves the first bit incorrectly, and if it is in the red
region he retrieves the second bit incorrectly.  
If the $\rho_{\xstr}$ include a convex combination of hidden variables which include
the grey, blue and red regions then the states are now steerable, but
the uncertainty
relations with respect to these states give
$\zeta_{00}=\zeta_{11}=1-\theta_B/2\pi$ and $\zeta_{01}=\zeta_{10}=1-(\pi-\theta_B)/2\pi$
since the error probabilities are just proportional to the size of the red and blue regions.
Hence, we obtain $\pgame=\zetamaxx=3/4$.  

\noindent
{\bf (iii) No-signalling theories with maximal non-locality}:

It is possible to construct a probability distribution which 
not only obeys the no-signalling constraint and violates the CHSH
inequality~\cite{rastall,tsirelson1993some}, but also violates it more strongly than
quantum theory, and in fact is maximally non-local~\cite{PR}.  
Objects which have this property we call PR-boxes, and in particular they
allow us to win the CHSH game with probability $1$.
Note that this implies 
that there is no uncertainty in measurement outcomes:
$\zetamaxx= 1$ for all $\xstr \in \01^2$ 
where $\rho_{\xstr}$ is the maximally certain state for
$\xstr$~\cite{barrett:nonlocal,gs:relaxedUR}.
Conditional on Alice's
measurement setting and outcome, Bob's answer must be correct for either
of his measurements labelled $t=0$ and $t=1$
and described 
by the following probability distributions
$p(b|t)$
for measurements $t \in \01$,
\begin{align}
\rho_{00} &= \{p(0|0) = 1, p(0|1) = 1\}\ ,\\
\rho_{01} &= \{p(1|0) = 1, p(1|1) = 1\}\ ,\\
\rho_{10} &= \{p(0|0) = 1, p(1|1) = 1\}\ ,\\
\rho_{11} &= \{p(1|0) = 1, p(0|1) = 1\}\ .
\end{align}
The only constraint that needs to be imposed on our ability to steer to
these states is given by the no-signalling condition, and indeed 
Alice may steer to the ensembles 
$\{1/2,\rho_{{\vec{x}_{0,a}}}\}_a$ and $\{1/2,\rho_{{\vec{x}_{1,a}}}\}_a$ at will
while still satisfying the no-signalling condition.
We hence have $\pgame =  \zetamaxx = 1$.

\section{Uncertainty and complementarity}
\label{sec:compl}

Uncertainty and complementarity are often conflated. However, even though they are closely related, they are nevertheless distinct concepts.
Here we provide a simple example that illustrates their differences; a full discussion of their relationship 
is outside the scope of this work.
Recall that the uncertainty principle as used here
is about the possible probability
distribution of measurement outcomes $p(b|t_j)$ when one measurement $t_1$ is
performed on one system and another measurement $t_2$ is performed
on an identically prepared system. Complementarity on the other hand is the
notion that you can only perform one of two incompatible measurements (see for example \cite{Bohr1935} since
one measurement disturbs the possible 
measurement outcomes when both measurements are performed
on the same system in succession. We will find that although
the degree of non-locality determines how uncertain measurements
are, this is not the case for complementarity -- there are theories
which have less complementarity than quantum mechanics, but the 
same degree of non-locality and uncertainty.

Let $\post$ denote the state of the system after we performed the measurement labelled $t$
and obtained outcome $b$.
After the measurement, we are then interested in the probability distribution 
$\ppost{b'|t'}$ of obtaining outcome $b'$ when performing measurement $t'$ on the post-measurement state.
As before, we can consider the term
\begin{align}
\eta^{\vec{x}} =
 \sum_{t'} 
p(t') \ppost{b' | t'}
\end{align}
This quantity has a similar form as 
$\zetastr$, and in the 
case where a measurement is equivalent to a preparation, we clearly 
have that 
\begin{align}\label{eq:uncertComplement}
\eta^{\vec{x}}\leq
\zetastr 
\end{align}
since post-measurement state when obtaining 
outcome $b$ is just a particular preparation, while  
$\zetastr$
is a maximisation over all preparations. 

There is however a different way of looking at complementarity, which is about
the extraction of information. In this sense, 
one would say that two measurements are complementary, if the second measurement
can extract no more information about the preparation procedure than
the first measurement and visa versa. 
We refer to this
as {\it information complementarity}.
Note that quantum mechanically, this does not necessarily
have to do with whether two measurements commute. For example,
if the first measurement is a complete Von Neumann measurements, then 
all subsequent measurements gain no new information than the first one
whether they commute or otherwise. 

\subsection{Quantum mechanics could be less complementarity with the
same degree of non-locality}

We now consider a simple example that illustrates the differences between complementarity and uncertainty.
Recall from Section~\ref{sec:retrieval} that the CHSH game is an instance of a retrieval game where Bob
is challenged to retrieve either the first or second bit of a string $\xstr \in \01^2$ prepared by Alice.
For our example, we then imagine that the initial state $\sigma_\xstr$ of Bob's system corresponds to an encoding of a string $\xstr \in \01^2$, 
and fix Bob's measurements to be the two possible optimal measurements he performs in the CHSH
game when given questions $t=0$ and $t=1$ respectively. 
When considering complementarity between the measurements labelled by $t=0$ and $t=1$, we are interested in the 
probabilities $\ppost{b'| t'}$ of decoding the second bit from the post-measurement state $\tau_{t,b}$, after 
Bob performed the measurement $t$ and obtained outcome $b$.
We say that there is no-complementarity if $\ppost{b'|t'} = p(b'|t')_{\sigma_\xstr}$ for all $t'$ and $b'$. That is,
the probabilities of obtaining outcomes $b'$ when performing $t'$ are the same as if Bob had not measured $t$ at all.
Note that if the measurements that Bob (and Alice) perform in a non-local
game had no-complementarity, 
then their statistics could be described by a LHV model, since one can assign a fixed probability
distribution to the outcomes of each measurement. As a result, if
there is no complementarity, there cannot be a violation of the CHSH inequality.

We now ask what are the allowed values for $\ppost{b'|t'}$ subject to the restrictions imposed by Eq.~\ref{eq:uncertComplement} and no-signalling? 
For the case of CHSH where the notions of uncertainty and non-locality are equivalent, 
the no-signalling principle dictates that Bob can never learn the parity of the string $\xstr$
as this would allow him to determine Alice's measurement setting $s$. Since Bob might use his two measurements to determine
the parity, this imposes an additional constraint on the allowed probabilities $\ppost{b'|t'}$. For clarity, we will write
$p(right|t,\xstr) = p(b = \xstr^{(t)}|t)_{\sigma_{\xstr}}$ for the probability that Bob correctly retrieves the $t$-th bit
of the string $\xstr$ on the initial state, and 
$p(right|t',right) = p(\xstr^{(t')}|t')_{\tau_{\xstr^{(t)},t}}$ and  
$p(right|t',wrong) = p(\xstr^{(t')}|t')_{\tau_{\xstr^{(1-t)},t}}$ for the probabilities that he correctly retrieves the $t'$-th bit
given that he previously retrieved the $t$-th bit correctly or incorrectly respectively.
For any $t \neq t'$, the fact that Bob can not learn the parity by performing the two measurements in succession can then be expressed as
\begin{align}\label{eq:parityCondition}
&p(right|t,\xstr) p(right|t',right) +
&p(wrong|t,\xstr) p(wrong|t',wrong) = \frac{1}{2}\ 
\end{align}
since retrieving both bits incorrectly will also lead him to correctly 
guess the parity. The tradeoff between $p(right|t',right)$ and $p(wrong|t',wrong)$ dictated by Eq.~\ref{eq:parityCondition} is captured by Figure~\ref{fig:parityTradeoff}.
\begin{figure}[h]
\begin{center}
\includegraphics[scale=1]{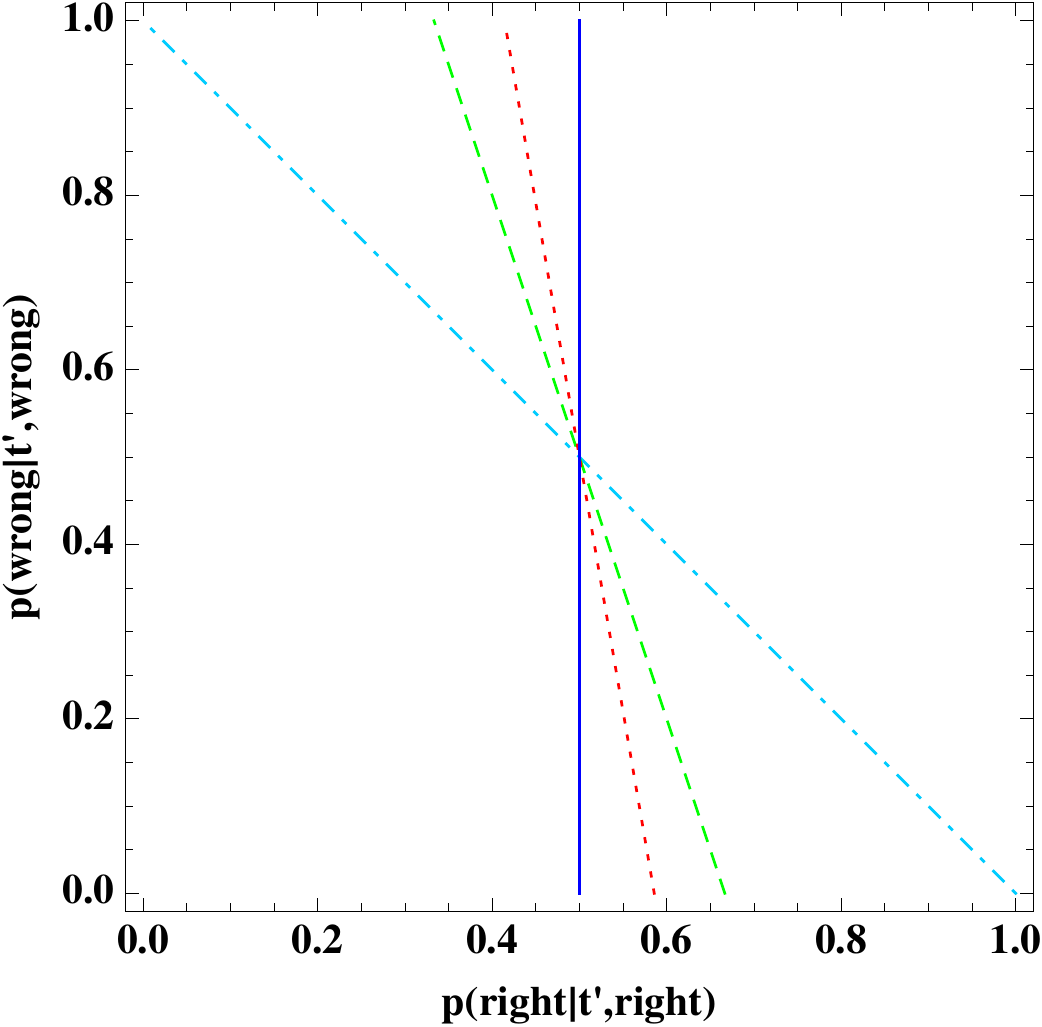}
\caption{Allowed values for $(p(right|t',right),p(wrong|t',wrong))$ for 
$p(right|t,\xstr) = 1/2$ (dot dashed light blue line), $3/4$ (dashed green line), $1/2 + 1/(2\sqrt{2})$ (dotted red line) and $1$ (solid blue line).}
\label{fig:parityTradeoff}
\end{center}
\end{figure}

As previously observed, the ``amount'' of uncertainty $\zeta_\xstr$ directly determines the violation of the CHSH inequality. In the
quantum setting we have for all $\xstr$ that $\zeta_\xstr = 1/2 + 1/(2\sqrt{2}) \approx 0.853$, where for the individual probabilities
we have for all $s$, $a$, $t$ and $b$
\begin{align}\label{eq:right}
p(right|t,\xstr) = \frac{1}{2} + \frac{1}{2 \sqrt{2}}\ .
\end{align}
We also have that $p(right|t',right) = p(right|t',wrong) = 1/2$ for all $t' \neq t$. That is Bob's optimal measurements
are maximally complementary: Once Bob performed the first measurement, he can do no better than to guess the second bit.
Note, however, that neither Eq.~\ref{eq:parityCondition} nor letting $\zeta_\xstr = 1/2+1/(2\sqrt{2})$ demands that quantum mechanics
be maximally complementary without referring to the Hilbert space formalism. In particular, consider the average probability
that Bob retrieves the $t'$-th bit correctly after the measurement $t$ has already been performed which is given by 
\begin{align}\label{eq:second}
p_{\rm second} &= p(right|t,\xstr) p(right|t',right) +
 &p(wrong|t,\xstr) p(right|t',wrong)\ .
\end{align}
We can use Eq.~\ref{eq:parityCondition} to determine
$p(right|t',wrong) = 1 - p(wrong|t',wrong)$.
Using Eq.~\ref{eq:right} we can now maximize $p_{\rm second}$ over the only free remaining variable $p(right|t',right)$ such that
$0 \leq p(wrong|t',wrong) \leq 1$.
This gives us $p_{\rm second} = 1 - 1/(2\sqrt{2}) \approx 0.65$ which is attained for $p(right|t',right) = 2 - \sqrt{2} \approx 0.59$ and
$p(right|t',wrong) = 1$. However, for the measurements used by Bob's optimal quantum strategy we only have $p_{\rm second} = 1/2$.
We thus see that it may be possible to have a physical theory which is as non-local and uncertain as quantum mechanics, but at the same
time less complementary. We would like to emphasize, however, that in quantum theory it is known~\cite{peres:book} that Bob's observables
must be maximally complementary in order to achieve Tsirelson's bound, which is a consequence of the Hilbert space formalism.

Another interesting example is the case of a PR-boxes and other less
non-local boxes. Here, we can have $p(right|t,\xstr) = 1 - \eps$ for all $t$, $\xstr$ and $1/2 \leq\eps \leq 1$. Note
that for $\eps \rightarrow 1$, maximising Eq.~\ref{eq:second} gives us
$p_{\rm second} = 1/2$ with 
$p(right|t',right) = 1/2$. 
Figure~\ref{fig:complementCurve} shows the value of $p_{\rm second}$ in terms of
$p(right|t,\xstr)$.
If there is no uncertainty, we thus have maximal complementarity. 
However, as 
we saw from the example of quantum mechanics, for intermediate values of uncertainty, the degree of complementarity is not uniquely determined.  
\begin{figure}[h]
\begin{center}
\includegraphics[scale=1]{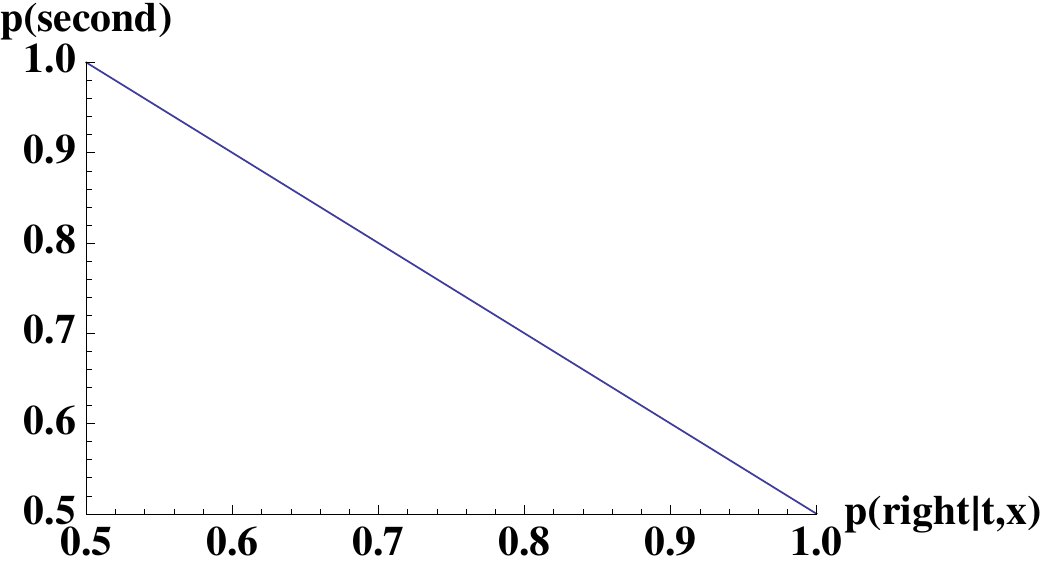}
\caption{Maximum value of $p_{\rm second}$ subject to Eq.~\ref{eq:parityCondition} in terms of $p(right|t,\xstr)$.}
\label{fig:complementCurve}
\end{center}
\end{figure}

It would be interesting to examine the role of comlementarity and its relation to non-locality in more detail. 
The existance of general monogamy relations~\cite{ben:monogamy} for example could be understood as a special form of complementarity when particular measurements are applied.

\section{From uncertainty relations to non-local games}
\label{sec:URtoNL}

In the body of the paper, we showed how every game corresponds to an uncertainty relation.  
Here we show that every uncertainty relation gives rise to a game.
To construct the game from an uncertainty relation, we can simply go the other way.

We start with the set of inequalities
\begin{align}\label{eq:urRelations-copy}
\cU = \left\{\sum_{t=1}^{n}  p(t)\ p(b_t|t)_\sigma\ \leq \zeta_\str \mid \forall \str \in \mB^{\times n} \right\}\ 
\nonumber
\end{align}
which form the uncertainty relation. Additionally, they 
can be thought of
as the average success probability
that Bob will be able to correctly output the $t$'th entry from
a string $\str$ \footnote{Indeed, the left hand side of the inequality corresponds
to a set of operators $\sum_{t=1}^{n}  p(t)\ M^{b_t}_t$ which together form a complete Postive Operator-Valued Measure (POVM) and can be used to measure this
average.}.

Here, he is performing 
a measurement on the state $\sigma$ which encodes the string $\str$.
The maximisation of these relations
\begin{align}
\psuc_\Sigma(\str)=
\zetamax
= \max_{\sigma \in \Sigma} \sum_{t=1}^n p(t) p(b_t|t)_\sigma\ ,
\end{align}
will play a key role, where here, 
the maximisation is taken over a set $\Sigma$ determined
by the theory's steering properties. 

Consider the set of all strings $\str$ induced by the above uncertainty relation.  
We construct the game by choosing some 
partitioning of these strings
into sets $P_1,\ldots,P_M$ such that
$\cup P_s = \setB^{\times L}$.
We we will challenge
Alice to output one of the strings in set $s$ and Bob to output a certain entry $t$ of that 
string. 
Alice's settings are given by $\setS = \{1,\ldots,M\}$, that is, each setting will 
correspond to a set $P_s$. The outcomes for a setting $s \in \setS$ are simply the strings contained in $P_s$, which using
the notation from the main paper we will label $\xstr$. 

Bob's settings in the game are in one-to-one correspondence to the measurements for
which we have an uncertainty relation. That is, we label his settings by his choices of measurements $\setT = \{1,\ldots,n\}$ 
and his outcomes by the alphabet of the string. We furthermore, let the distributions over measurements be given by $p(t)$
as in the case of the uncertainty relation. Note that the distribution $p(s)$ over Alice's measurement settings is not yet defined and may 
be chosen arbitrarily.
The predicate is now simply defined as $V(a,b|s,t) = 1$ 
if and only if $b_t$ is the $t$'th entry in the string $\xstr$.

Note that due to our construction, Alice will be able to steer Bob's part of the state
into some state $\sigma_{\xstr}$ encoding the string $\str$ for all settings $s$. 
We are then interested in the set of ensembles  
$\mathcal{I} = \{\mathcal{E}_s\}_s$ 
that the theory allows Alice to steer to.  In no-signalling theories,
this corresponds to
probability distributions 
$  \{p(\str |s)\}_s$ and an average state $\rho$ such that
$$
\sum_{\str \in P_s} p(\str|s)\ \sigma_\str = \rho
$$
As before, the states we can steer to in a particular ensemble $s$, 
we denote by $\Sigma_s$.

Now that we have defined the game this way, we obtain the following lower bound on the value 
$\omega(G)$ of the game.
\begin{align}
\sum_{s} \sum_{a} p(s) p({\xstr}|s) \psuc_{\Sigma_s}(\xstr) \leq \pgameMAX\ . 
\end{align}
Whereas the measurements of our uncertainty relation do form a possible strategy for Alice and Bob, which due to the steering property
can indeed be implemented, they may not be optimal. 
Indeed, there may exist an altogether different strategy consisting of different
measurements and a different state in possibly much larger dimension that enables them to do significantly better.

However, having defined the game, we may now again consider a new uncertainty relation in terms of the optimal states and measurements for this 
game. For these optimal measurements, a violation of the uncertainty relations then lead to a violation of the corresponding Tsirelson's bound and vice versa.

\end{document}